\newcolumntype{.}[1]{D{.}{.}{#1}}
\newcolumntype{d}[1]{D{.}{.}{#1}}
\newcommand{\pushleft}[1]{\ifmeasuring@#1\else\omit$\displaystyle#1$\hfill\fi\ignorespaces}
\newcommand{\pushright}[1]{\ifmeasuring@#1\else\omit\hfill$\displaystyle#1$\fi\ignorespaces}
\let\originalleft\left
\let\originalright\right
\renewcommand{\left}{\mathopen{}\mathclose\bgroup\originalleft}
\renewcommand{\right}{\aftergroup\egroup\originalright}
\DeclareFontFamily{U}{mathx}{\hyphenchar\font45}
\DeclareFontShape{U}{mathx}{m}{n}{<-> mathx10}{}
\DeclareSymbolFont{mathx}{U}{mathx}{m}{n}
\DeclareMathAccent{\widebar}{0}{mathx}{"73}
\DeclareMathOperator{\1}{\operatorname{1}}
\DeclareMathOperator{\E}{\operatorname{E}}
\newcommand{\X}{\mathcal{X}}
\newcommand{\Y}{\mathcal{Y}}
\newcommand{\R}{\mathbb{R}}
\renewcommand{\d}[1]{\ensuremath{\operatorname{d}\!{#1}}}
\theoremstyle{plain}
\newtheorem{lem}{Lemma}
\newtheorem{thm}{Theorem}
\newtheorem{pro}{Proposition}
\theoremstyle{definition}
\newlist{asmenum}{enumerate}{1}
\setlist[asmenum]{label=(\roman*),ref=\theasm{}(\roman*)}
\crefname{asm}{Assumption}{Assumptions}
\crefname{lem}{Lemma}{Lemmas}
\crefname{thm}{Theorem}{Theorems}
\crefname{cor}{Corollary}{Corollaries}
\crefname{pro}{Proposition}{Propositions}
\crefname{rmk}{Remark}{Remarks}
\crefname{equation}{}{}
\newcommand{\noop}[1]{}
\title{Counterfactual Copula and Its Application to the Effects of College Education on Intergenerational Mobility\thanks{\protect\linespread{1}\protect\selectfont We would like to thank Yi-Ting Chen, Chih-Sheng Hsieh, Yu-Chin Hsu and participants at the 2022 Annual Meeting of Taiwan Econometric Society for their valuable comments. Jiun-Hua Su gratefully acknowledges research support from the Ministry of Science and Technology, Taiwan (MOST 107-2410-H-001-112-MY2).}}
\author{
Tsung-Chih Lai\thanks{Department of Economics, National Chung Cheng University. Email: tclai@ccu.edu.tw}
\and
Jiun-Hua Su\thanks{Corresponding author. Institute of Economics, Academia Sinica. Email: jhsu@econ.sinica.edu.tw}
}
\date{}
\begin{document}

\maketitle

\begin{abstract}
This paper proposes a nonparametric estimator of the counterfactual copula of two outcome variables that would be affected by a policy intervention.
The proposed estimator allows policymakers to conduct \emph{ex-ante} evaluations by comparing the estimated counterfactual and actual copulas as well as their corresponding measures of association.
Asymptotic properties of the counterfactual copula estimator are established under regularity conditions.
These conditions are also used to validate the nonparametric bootstrap for inference on counterfactual quantities.
Simulation results indicate that our estimation and inference procedures perform well in moderately sized samples.
Applying the proposed method to studying the effects of college education on intergenerational income mobility under two counterfactual scenarios,
we find that while providing some college education to \emph{all} children is unlikely to promote mobility, offering a college \emph{degree} to children from less educated families can significantly reduce income persistence across generations.

\bigskip

\noindent
{\bf Keywords}: copula, counterfactual policy effect, intergenerational mobility

\noindent
{\bf JEL classifications}: C14, C31, J62

\end{abstract}

\newpage

\section{Introduction}
\label{Introduction}
Counterfactual analysis is an important method in policy evaluation, especially when neither a policy nor its quasi-experiment has yet been implemented.
Recent studies, for example \citet{Rothe2010}, \citet{ChernozhukovFernandez-ValEtAl2013a}, and \citet{Hsu2022}, focus on the counterfactual distribution of a scalar outcome variable affected by an exogenous policy intervention.
However, a change in a policy variable may also cause a change in dependence structure between two outcome variables of interest.
For example, in the intergenerational mobility literature, college education has been regarded as a ``great equalizer'' to reduce the correlation between incomes across generations by offering equal opportunities to children regardless of the advantages of origins (\citeauthor{Torche2011}, \citeyear{Torche2011} and \citeauthor{Chetty2020}, \citeyear{Chetty2020}).
Accordingly, policymakers would wish to evaluate \emph{ex ante} to what extent income mobility could be improved if the higher education system were further expanded.

Motivated by such counterfactual policy evaluation, we are concerned with its effect on a bivariate copula function of the two outcome variables
\begin{align*}
Y_1=g_1(X,\varepsilon_1)\;\;\text{and}\;\;
Y_2=g_2(X,\varepsilon_2),
\end{align*}
where $(Y_1,Y_2)^\top$ is a two-dimensional vector of outcome variables that are continuously distributed, $X$ is a $d$-dimensional vector of covariates, $(\varepsilon_1,\varepsilon_2)^\top$ is individual unobserved heterogeneity in an arbitrary measurable space of unrestricted dimensionality, and $(g_1,g_2)^\top$ are functions that are unknown to researchers.
Policymakers attempt to exogenously manipulate the covariate from $X$ to $X^*$, and this manipulation leads to the counterfactual outcome variables
\begin{align*}
Y_1^{*}=g_1(X^{*},\varepsilon_1)\;\;\text{and}\;\;
Y_2^{*}=g_2(X^{*},\varepsilon_2),
\end{align*}
but the functions $(g_1,g_2)^\top$ and unobserved heterogeneity $(\varepsilon_1,\varepsilon_2)^\top$ are unaffected by the exogenous manipulation.
We call the copula $C_{Y_1^*Y_2^*}$ of $(Y^*_1,Y^*_2)^\top$ the \emph{counterfactual copula}, in contrast to the \emph{actual copula} $C_{Y_1Y_2}$, which is the copula of $(Y_1,Y_2)^\top$ in the status quo.
The policy effects of interest could be the difference in copula, $C_{Y_1^*Y_2^*}-C_{Y_1Y_2}$, and the difference in some measure of association, $\nu(C_{Y_1^*Y_2^*})-\nu(C_{Y_1Y_2})$, where $\nu$ is a functional defined on the collection of all copulas.
For example, the change in Spearman's rho between bivariate outcomes could be the main concern, as it corresponds to the rank-rank slope, a commonly used measure of intergenerational mobility in economics (\citeauthor{Chetty2014}, \citeyear{Chetty2014}).
In this case, the specific functional is $\nu_{\rho}:C\mapsto 12\int_{[0,1]^2} u_1u_2\d C(u_1,u_2)-3$.

We first show identification of the counterfactual copula under standard assumptions and adopt the principle of analogy to propose a nonparametric estimator $\widehat{C}_{Y_1^*Y_2^*}$ of the counterfactual copula in the presence of a random sample $\{(Y_{1i},Y_{2i},X_i,X_i^*)\}_{i=1}^n$.
This proposed estimator, together with the empirical copula estimator $\widehat{C}_{Y_1Y_2}$ of the actual copula, allows researchers to evaluate the effects of a policy before its implementation.
We then establish the joint weak convergence of $(\sqrt{n}(\widehat{C}_{Y_1^*Y_2^*}-C_{Y_1^*Y_2^*}),\sqrt{n}(\widehat{C}_{Y_1Y_2}-C_{Y_1Y_2}))^\top$ and $\sqrt{n}((\nu(\widehat{C}_{Y_1^*Y_2^*})-\nu(\widehat{C}_{Y_1Y_2}))-(\nu(C_{Y_1^*Y_2^*})-\nu(C_{Y_1Y_2})))$, provided that $\nu$ is Hadamard differentiable.
Since the limiting processes would have complicated covariance structures with unknown nuisance functions, we also validate the nonparametric bootstrap to draw inference on counterfactual quantities of interest.
Results of the simulation show that the counterfactual quantities of interest are $\sqrt{n}$-consistent, which is in line with our theoretical analysis.

Viewing our method as an essential complement to the existing toolkit for policy evaluation, we apply it to evaluating the effects of college education on intergenerational income mobility in the United States.
We use data obtained from the Panel Study of Income Dynamics (PSID) and consider two counterfactual scenarios.
In the first scenario, children with low educational attainment are assigned more years of schooling as if extra ``compulsory'' schooling years were required.
We find that conferring a college degree on all children can substantially reduce income persistence by about one-third of the magnitude of the considered association measure, including Spearman's rho, Kendall's tau, Gini's gamma, and Blomqvist's beta.
This finding echoes the sheepskin effects in literature, indicating that a college degree provides additional returns beyond educational attainment alone (\citeauthor{Hungerford1987}, \citeyear{Hungerford1987}; \citeauthor{Jaeger1996}, \citeyear{Jaeger1996}).
From a pragmatic perspective, a policy typically covers some, rather than all, families; additionally, existing empirical results (\citeauthor{Peters1992}, \citeyear{Peters1992}; \citeauthor{Sikhova0}, \citeyear{Sikhova0}) suggest that parental education is a key determinant of offspring income.
We thus consider the second scenario in which \emph{only} children from less educated families are required to complete a four-year college degree.
Our empirical results show that income persistence across generations is significantly lower if only children of parents with educational attainment less than, or equal to, ten years of schooling are required to complete a four-year college degree; in this case, families affected by this counterfactual policy account for about 15\% of the sample.
In contrast, expanding this policy to include families at the upper tail of the parental education distribution would result in only a negligible improvement in income mobility.

The rest of this paper is organized as follows. \cref{Model} discusses the model and relevant measures of association, and provides identification and estimation of the counterfactual copula. In \cref{Asymptotic}, we derive the joint weak convergence of the counterfactual and empirical copula processes and validate the nonparametric bootstrap for counterfactual quantities. \cref{Simulation} reports the simulation results, and \cref{EmpiricalStudy} presents the empirical study regarding the heterogeneous effects of college education on intergenerational mobility.
Finally, \cref{Conclusion} concludes.
Technical proofs of theorems and lemmas are deferred to \cref{Appendix}.

\section{Identification and Estimation}\label{Model}
As described in the introduction, we consider two continuous outcome variables
 \begin{align*}
Y_1=g_1(X,\varepsilon_1)\;\;\text{and}\;\;
Y_2=g_2(X,\varepsilon_2),
\end{align*}
where $X$ is a $d$-dimensional vector of covariates, $\varepsilon_1$ and $\varepsilon_2$ are individual unobserved heterogeneity, and $g_1$ and $ g_2$ are structural functions.\footnote{The nonseparable model could be viewed as the reduced-form equations induced by some underlying structural equations. A simple example is the triangular nonparametric simultaneous equations model
\begin{align*}
Y_1=\tilde{g}_1(Y_2)+e_1\;\;\text{and}\;\;
Y_2=\tilde{g}_2(X)+e_2.
\end{align*}
Substituting the second equation into the first yields the reduced-form equations
\begin{align*}
Y_1=\tilde{g}_1(\tilde{g}_2(X)+e_2)+e_1
=g_1(X,\varepsilon_1)\;\;\text{and}\;\;
Y_2=\tilde{g}_2(X)+e_2=g_2(X,\varepsilon_2),
\end{align*}
where $\varepsilon_1=(e_1,e_2)^\top$ and $\varepsilon_2=e_2$.
Unlike \citet{NeweyPowellEtAl1999}, who are concerned with the structural function $\tilde{g}_1$, we are interested in the dependence structure between $Y_1$ and $Y_2$.}
This bivariate setting is very flexible.
First, the structural functions $g_1$ and $ g_2$ are unknown to researchers, so \textit{ad hoc} parametric assumptions need not be imposed.
In addition, the policy effect could depend on unobserved heterogeneity because both $\varepsilon_1$ and $\varepsilon_2$ are nonseparable.
The dimensionality of $(\varepsilon_1,\varepsilon_2)^\top$ is also unspecified, so incorrect inference due to the exclusion of some important heterogeneity can be avoided.
Finally, it is possible that some elements of $X$ might be excluded from the structural functions.\footnote{
To see this, denoting $X=(X_0,X_1,X_2)^\top$, we write $Y_1=g_1(X_0,X_1,\varepsilon_1)$ and $Y_2=g_2(X_0,X_2,\varepsilon_2)$ in the case where $X_0$ affects $(Y_1,Y_2)^\top$, $X_1$ affects only $Y_1$, and $X_2$ affects only $Y_2$.}

Suppose that a counterfactual policy, manipulating the covariate from $X$ to $X^*$, causes a change in bivariate outcomes from $(Y_1,Y_2)^\top$ to $(Y^*_1,Y^*_2)^\top$, where
\begin{align*}
Y_1^*=g_1(X^*,\varepsilon_1)\;\;\text{and}\;\;
Y_2^*=g_2(X^*,\varepsilon_2).
\end{align*}
This change in bivariate outcomes in turn causes a change in their copula, and the difference between the counterfactual copula and the actual copula
\[
\Delta_{C}(u_1,u_2)\equiv C_{Y_1^*Y_2^*}(u_1,u_2)-C_{Y_1Y_2}(u_1,u_2)
\]
is called the \emph{policy effect on the copula}. The change in copula may be accompanied by changes in measures of association; hence, policymakers would be interested in the corresponding \emph{policy effect on the association}
\[
\Delta_{\nu}\equiv \nu(C_{Y_1^*Y_2^*})-\nu(C_{Y_1Y_2})
\]
for the following common measures:
\begin{enumerate}[(i)]
\item
Spearman's rho $\nu_{\rho}:C\mapsto 12\int_{[0,1]^2} u_1u_2\d C(u_1,u_2)-3$;
\item
Kendall's tau $\nu_{\tau}:C\mapsto 4\int_{[0,1]^2} C(u_1,u_2)\d C(u_1,u_2)-1$;
\item
Gini's gamma $\nu_{\gamma}:C\mapsto 2\int_{[0,1]^2} [|u_1+u_2-1|-|u_1-u_2|]\d C(u_1,u_2)$;
\item
Blomqvist's beta $\nu_{\beta}: C\mapsto 4C(0.5,0.5)-1$.
\end{enumerate}
We refer to \citet{Nelsen2006} for an excellent overview of these measures.

Since each aforementioned measure of association can be written in terms of some functional of the copula, the identification of those counterfactual policy effects on association is achieved if we can identify the counterfactual copula $C_{Y_1^*Y_2^*}$ and actual copula $C_{Y_1Y_2}$.
\citeauthor{Sklar1959}'s (\citeyear{Sklar1959}) theorem implies that the actual copula is identified by
\begin{align*}
C_{Y_1Y_2}(u_1,u_2)=F_{Y_1Y_2}(F^{-1}_{Y_1}(u_1),F^{-1}_{Y_2}(u_2)).
\end{align*}
Given independent and identically distributed observations $\{(Y_{1i},Y_{2i},X_i,X_i^*)\}_{i=1}^n$, the actual copula of $(Y_1,Y_2)^\top$ is commonly estimated by the empirical copula proposed by \cite{Deheuvels1979}; to be specific,
\begin{equation}
\label{empiricalcopula}
\widehat{C}_{Y_1Y_2}(u_1,u_2)\equiv\frac{1}{n}\sum_{i=1}^n\1\{Y_{1i}\leq \widehat{F}_{Y_1}^{-1}(u_1),Y_{2i}\leq \widehat{F}_{Y_2}^{-1}(u_2)\},
\end{equation}
where $\widehat F^{-1}_{Y_j}(u)=\inf\{y:\widehat F_{Y_j}(y)\geq u\}$ is the inverse function of the empirical CDF estimator $\widehat{F}_{Y_j}(y)=n^{-1}\sum_{i=1}^n\1\{Y_{ji}\leq y\}$ for $j\in\{1,2\}$.

To achieve the identification of counterfactual copula, we start by introducing assumptions as follows.\\[0.2cm]
\noindent
\textbf{Assumption I} (Identification)
\begin{enumerate}[label=I\arabic*]
\item \label{I1} $(X,X^*)^\top$ and $(\varepsilon_1,\varepsilon_2)^\top$ are independent.
\item \label{I2} The support of $X^*$ is a subset of the support of $X$.
\item \label{I3} The counterfactual outcome variables $(Y_1^*,Y_2^*)^\top$ are continuously distributed.
\end{enumerate}

\noindent
Assumption~\ref{I1} requires exogeneity of $(X,X^*)^\top$, which may be strong in some empirical studies, and is sufficient for Assumption 2.3 in \citet{Hsu2022}.\footnote{
Assumption~\ref{I1} could be relaxed by the control function approach but such consideration goes beyond the scope of the present paper.}
Assumption~\ref{I2} excludes the extrapolation of covariates.
Such extrapolation is inappropriate because none of parametric setups is imposed on the functions $g_1$ and $g_2$ outside the support of $X$.
Assumptions~\ref{I1} and~\ref{I2}, together with the availability of data on $(Y_1,Y_2,X,X^*)^\top$, allow us to identify the joint counterfactual cumulative distribution function (CDF) $F_{Y_1^*Y_2^*}$ and marginal counterfactual CDFs $(F_{Y_1^*},F_{Y_2^*})^\top$,
as shown in \cite{Rothe2010} and \cite{ChernozhukovFernandez-ValEtAl2013a}.
Moreover, Assumption~\ref{I3} allows us to uniquely express the counterfactual copula $C_{Y_1^*Y_2^*}$ in terms of counterfactual CDFs $(F_{Y_1^*Y_2^*},F_{Y_1^*},F_{Y_2^*})^\top$ by Sklar's theorem.
More details of Sklar's theorem can be found in \cite{Nelsen2006}.
We summarize the discussion in the following proposition.

\begin{pro}
\label{PRO1}
Under Assumption {\normalfont I}, the counterfactual copula is identified by
\begin{align*}
C_{Y_1^*Y_2^*}(u_1,u_2)
&=F_{Y_1^*Y_2^*}(F^{-1}_{Y_1^*}(u_1),F^{-1}_{Y_2^*}(u_2))\\
&=\int F_{Y_1Y_2|X}(F^{-1}_{Y_1^*}(u_1),F^{-1}_{Y_2^*}(u_2)|x)F_{X^*}(\d x),
\end{align*}
where $F^{-1}_{Y_j^*}(u)\equiv\inf\{y:F_{Y_j^*}(y)\geq u\}$ is the inverse function of
\[
F_{Y_j^*}(y)=\int F_{Y_j|X}(y|x)F_{X^*}(\d x)
\]
for every $j\in\{1,2\}$, and
\[
F_{Y_{1}^{*}Y_{2}^{*}}(y_{1},y_{2})=\int F_{Y_{1}Y_{2}|X}(y_{1},y_{2}|x)\; F_{X^{*}}(\d x).
\]
\end{pro}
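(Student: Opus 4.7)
The plan is to prove the three displayed identities in order: first the marginal counterfactual CDFs, then the joint counterfactual CDF, and finally the copula representation, which follows from Sklar's theorem once the CDFs are identified. Throughout, Assumption~\ref{I1} supplies the independence needed to translate expressions involving $(X^*,\varepsilon_1,\varepsilon_2)$ into quantities about the observable conditional distribution given $X$, and Assumption~\ref{I2} ensures that every conditional CDF evaluated at an argument in the support of $X^*$ is well-defined.

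For the joint CDF, I would start from the definition
\[
F_{Y_1^*Y_2^*}(y_1,y_2)=\Prob\bigl(g_1(X^*,\varepsilon_1)\leq y_1,\;g_2(X^*,\varepsilon_2)\leq y_2\bigr),
\]
condition on $X^*=x$, and integrate against $F_{X^*}$. Using Assumption~\ref{I1}, which gives $(\varepsilon_1,\varepsilon_2)\indep X^*$, the conditional probability inside the integral equals $\Prob(g_1(x,\varepsilon_1)\leq y_1,\,g_2(x,\varepsilon_2)\leq y_2)$. Applying Assumption~\ref{I1} again, this time with $(\varepsilon_1,\varepsilon_2)\indep X$, and restricting $x$ to the support of $X$ via Assumption~\ref{I2}, the same probability equals $\Prob(g_1(X,\varepsilon_1)\leq y_1,\,g_2(X,\varepsilon_2)\leq y_2\mid X=x)=F_{Y_1Y_2|X}(y_1,y_2\mid x)$. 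The marginal identity for $F_{Y_j^*}$ follows by the same argument applied to a single outcome.

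With the CDFs identified, the copula representation reduces to Sklar's theorem. Assumption~\ref{I3} guarantees that $(Y_1^*,Y_2^*)^\top$ is continuously distributed, so its copula is unique and admits the explicit form $C_{Y_1^*Y_2^*}(u_1,u_2)=F_{Y_1^*Y_2^*}(F_{Y_1^*}^{-1}(u_1),F_{Y_2^*}^{-1}(u_2))$; plugging in the integral expressions for the joint and marginal counterfactual CDFs derived in the previous step yields the second displayed equality.

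The main obstacle is the conditioning step: one must apply Assumption~\ref{I1} twice (once to pass from $X^*=x$ to evaluating $g_j$ at the fixed point $x$, once to reintroduce $X=x$ on the conditional of the observed data) and invoke Assumption~\ref{I2} to ensure the resulting $F_{Y_1Y_2|X}(\cdot\mid x)$ is defined $F_{X^*}$-almost everywhere. None of the remaining manipulations goes beyond Fubini and the standard Sklar representation, so the subtlety lies entirely in making the two independence invocations precise.
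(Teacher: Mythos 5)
Your proposal is correct and follows essentially the same route as the paper: condition on $X^*=x$, use Assumption I1 twice (together with I2 to keep $x$ in the support of $X$) to replace the conditional law of $(\varepsilon_1,\varepsilon_2)$ given $X^*=x$ by that given $X=x$, obtain the integral representations of the joint and marginal counterfactual CDFs, and conclude via Sklar's theorem under I3. No substantive differences from the paper's argument.
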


\cref{PRO1} shows that the counterfactual copula $C_{Y_1^*Y_2^*}$ is identified without any restriction on the correlation between $\varepsilon_1$ and $\varepsilon_2$.\footnote{If $\varepsilon_1$ and $\varepsilon_2$ are conditionally independent given $X$, then the counterfactual copula can be further simplified to
\[
C_{Y_1^*Y_2^*}(u_1,u_2)
=\int F_{Y_1|X}(F^{-1}_{Y_1^*}(u_1)|x)F_{Y_2|X}(F^{-1}_{Y_2^*}(u_2)|x)F_{X^*}(\d x).
\]
In addition, the formula of $F_{Y_j^*}$ can be simplified if not all elements of $X$ are factors of $Y_j$ for $j\in\{1,2\}$.}
This proposition also suggests a nonparametric plug-in counterfactual copula estimator
\begin{align}
\label{copulaestimator}
\widehat{C}_{Y_1^*Y_2^*}(u_1,u_2)\equiv\frac{1}{n}
\sum_{i=1}^n\widehat{F}_{Y_1Y_2|X}(\widehat{F}^{-1}_{Y_1^*}(u_1),\widehat{F}^{-1}_{Y_2^*}(u_2)|X_i^*),
\end{align}
where for a kernel function $K$ and a bandwidth $h=h_n$ shrinking with the sample size $n$,
\[
\widehat{F}_{Y_1Y_2|X}(y_1,y_2|x)=\frac{\sum_{i=1}^n\1\{Y_{1i}\leq y_1,Y_{2i}\leq y_2\}K\left(\frac{X_i-x}{h}\right)}{\sum_{i=1}^nK\left(\frac{X_i-x}{h}\right)},
\]
and for $j\in\{1,2\}$, $\widehat F^{-1}_{Y_j^*}(u)=\inf\{y:\widehat F_{Y_j^*}(y)\geq u\}$ is the inverse function of
\[
\widehat{F}_{Y_j^*}(y)=\frac{1}{n}\sum_{i=1}^n\widehat{F}_{Y_j|X}(y|X_i^*)
\]
with $\widehat{F}_{Y_1|X}(y|x)=\widehat{F}_{Y_1Y_2|X}(y,\infty|x)$ and $\widehat{F}_{Y_2|X}(y|x)=\widehat{F}_{Y_1Y_2|X}(\infty,y|x)$.

The proposed estimator in \cref{copulaestimator} is easy to implement.
To see this, we define the joint counterfactual CDF estimator to be
\begin{align}
\label{CDFestimator}
\widehat{F}_{Y_{1}^{*}Y_{2}^{*}}(y_{1},y_{2})
\equiv\frac{1}{n}\sum_{i=1}^{n}W_{i,n}\1\{Y_{1i}\leq y_{1},Y_{2i}\leq y_{2}\},
\end{align}
where the weights $\{W_{i,n}\}_{i=1}^{n}$ are given by
\begin{align*}
W_{i,n}=\sum_{j=1}^n\frac{K\left(\frac{X_i-X^*_j}{h}\right)}{\sum_{\ell=1}^nK\left(\frac{X_{\ell}-X^*_j}{h}\right)}.
\end{align*}
The joint counterfactual estimator in \cref{CDFestimator} can be viewed as an extension of \citeauthor{Rothe2010}'s (\citeyear{Rothe2010}) marginal counterfactual CDF estimator to the two-dimensional setting; specifically, in this study, we have $\widehat{F}_{Y_1^*}(y_{1})=\widehat{F}_{Y_{1}^{*}Y_{2}^{*}}(y_{1},\infty)$ and $\widehat{F}_{Y_2^*}(y_{2})=\widehat{F}_{Y_{1}^{*}Y_{2}^{*}}(\infty,y_{2})$.
The estimator $\widehat{F}_{Y_{1}^{*}Y_{2}^{*}}$, compared with $\widehat{F}_{Y_{1}Y_{2}}$ can be also applied to testing bivariate stochastic dominance, which is an important topic in decision making; see for example \citet{DenuitHuangEtAl2014}.
More importantly, we can rewrite the counterfactual copula estimator in \cref{copulaestimator} as
\begin{align*}
\widehat{C}_{Y_1^*Y_2^*}(u_1,u_2)
=\frac{1}{n}\sum_{i=1}^nW_{i,n}\1\{Y_{1i}\leq \widehat{F}_{Y^*_1}^{-1}(u_1),Y_{2i}\leq
\widehat{F}_{Y^*_2}^{-1}(u_2)\}.
\end{align*}
This formula suggests that when evaluating $\widehat{C}_{Y_{1}^{*}Y_{2}^{*}}$ at multiple locations, we only require a one-off computation of weights $\{W_{i,n}\}_{i=1}^{n}$.\footnote{
Note that whenever every weight $W_{i,n}$ is nonnegative, both counterfactual CDF estimators $\widehat{F}_{Y^*_1}$ and $\widehat{F}_{Y^*_2}$ are nondecreasing; in this case, the counterfactual copula estimator $\widehat{C}_{Y_1^*Y_2^*}$ is $2$-increasing because for any $(u_1,u_2), (v_1,v_2)\in [0,1]^2$ with $u_1\leq v_1$ and $u_2\leq v_2$,
\begin{align*}
&\widehat{C}_{Y_1^*Y_2^*}(v_1,v_2)
-\widehat{C}_{Y_1^*Y_2^*}(v_1,u_2)
-\widehat{C}_{Y_1^*Y_2^*}(u_1,v_2)
+\widehat{C}_{Y_1^*Y_2^*}(u_1,u_2)\\
={}&\frac{1}{n}\sum_{i=1}^nW_{i,n}
\1\{\widehat{F}_{Y^*_1}^{-1}(u_1)<Y_{1i}\leq\widehat{F}_{Y^*_1}^{-1}(v_1)\}
\1\{\widehat{F}_{Y^*_2}^{-1}(u_2)<Y_{2i}\leq\widehat{F}_{Y^*_2}^{-1}(v_2)\}
\geq 0.
\end{align*}
In the case where some weights are negative, we are unaware of any study dealing with necessary adjustments to satisfy the $2$-increasing property of a bivariate CDF.
This task would be challenging and reserved for future work.
}
Although the counterfactual copula estimator in \cref{copulaestimator} has the same weights $\{W_{i,n}\}_{i=1}^n$ as those in \cite{Rothe2010},
it has summands with an indicator function evaluated at two estimated arguments $(\widehat{F}_{Y^*_1}^{-1}(u_1),\widehat{F}_{Y^*_2}^{-1}(u_2))^\top$, which introduces nonnegligible estimation effects, as we will see in the next section.

Combining estimators in \cref{copulaestimator,empiricalcopula}, we can estimate the policy effect on the copula by
\[
\widehat{\Delta}_{C}(u_1,u_2)\equiv\widehat{C}_{Y_1^*Y_2^*}(u_1,u_2)-\widehat{C}_{Y_1Y_2}(u_1,u_2),
\]
and the policy effect on the association by
\[
\widehat{\Delta}_{\nu}\equiv\nu(\widehat{C}_{Y_1^*Y_2^*})-\nu(\widehat{C}_{Y_1Y_2}).
\]

\section{Asymptotic Theory}
\label{Asymptotic}
The asymptotic properties of the empirical copula estimator in \cref{empiricalcopula} has been considerably studied over the past two decades.
One recent advance is that \cite{Segers2012} proves the weak convergence of the \emph{empirical copula process} $\widehat{\mathbb{C}}_{Y_1Y_2}\equiv\sqrt{n}(\widehat{C}_{Y_1Y_2}-C_{Y_1Y_2})$ in the space $(\ell^\infty([0,1]^2),\|\cdot\|_\infty)$ under the non-restrictive assumption on the copula in cases where data are independent and identically distributed.\footnote{
We write $(\ell^\infty([0,1]^2),\|\cdot\|_\infty)$ for the set of all bounded real-valued functions on $[0,1]^2$ equipped with the uniform norm $\|\cdot\|_{\infty}$.}
We present this assumption on the actual copula as follows.\\[0.2cm]
\textbf{Assumption C} (Non-restrictive Assumption on the Actual Copula)
\begin{enumerate}[label=C\arabic*]
\item[] \label{C} For each $j\in\{1,2\}$, the $j$-th partial derivative $\partial C_{Y_1Y_2}/\partial u_j$ exists and is continuous on the set $\{(u_1,u_2)^\top\in[0,1]^2:u_j\in(0,1)\}$.
\end{enumerate}

To extend the partial derivatives to the whole unit square $[0,1]^2$, we follow \cite{Segers2012} and define
\[
\frac{\partial C_{Y_1Y_2}(u_1,u_2)}{\partial u_1}
\equiv\begin{cases}
\limsup\limits_{u\downarrow0}\dfrac{C_{Y_1Y_2}(u,u_2)}{u} & \text{if $u_1=0$}, \\
\limsup\limits_{u\downarrow0}\dfrac{C_{Y_1Y_2}(1,u_2)-C_{Y_1Y_2}(1-u,u_2)}{u} & \text{if $u_1=1$},
\end{cases}
\]
and
\[
\frac{\partial C_{Y_1Y_2}(u_1,u_2)}{\partial u_2}
\equiv\begin{cases}
\limsup\limits_{u\downarrow0}\dfrac{C_{Y_1Y_2}(u_1,u)}{u} & \text{if $u_2=0$}, \\
\limsup\limits_{u\downarrow0}\dfrac{C_{Y_1Y_2}(u_1,1)-C_{Y_1Y_2}(u_1,1-u)}{u} & \text{if $u_2=1$}.
\end{cases}
\]
As discussed by \citet{Segers2012} and \citet{Buecher2014}, Assumption C is necessary for the existence of the limiting process and continuity of its trajectories.
Furthermore, \cite{Buecher2013} demonstrate that Assumption C implies Hadamard differentiability of the copula mapping; thus, weak convergence of the empirical copula process can be established as long as the empirical process $\sqrt{n}(\widetilde{G}_{Y_1Y_2}-C_{Y_1Y_2})$ converges weakly in the space $(\ell^\infty([0,1]^2),\|\cdot\|_\infty)$ to a tight, centered Gaussian process, where the empirical CDF
\[
\widetilde{G}_{Y_1Y_2}(u_1,u_2)\equiv\frac{1}{n}\sum_{i=1}^n\1\{F_{Y_1}(Y_{1i})\leq u_1,F_{Y_2}(Y_{2i})\leq u_2\}
\]
is constructed by the unobservable probability integral transforms $\{(F_{Y_1}(Y_{1i}),F_{Y_2}(Y_{2i}))\}_{i=1}^n$.
For completeness,
we summarize the discussion above and state the following proposition, which is obtained by \cite{Segers2012} for independent observations.\footnote{\citet{Buecher2013} and \citet{BuecherKojadinovic2016} obtain the same representation of $\widehat{\mathbb{C}}_{Y_1Y_2}$ for mixing observations.}

\begin{pro}
\label{PRO2}
Under Assumption {\normalfont C}, the empirical copula process has the representation
\begin{align*}
&\widehat{\mathbb{C}}_{Y_1Y_2}(u_1,u_2)\\
={}&\sqrt{n}(\widetilde{G}_{Y_1Y_2}(u_1,u_2)-C_{Y_1Y_2}(u_1,u_2))
-\sum_{j=1}^2\frac{\partial C_{Y_1Y_2}(u_1,u_2)}{\partial u_j}\sqrt{n}(\widetilde{G}_{Y_j}(u_j)-u_j)+R_n(u_1,u_2),
\end{align*}
where
$\widetilde{G}_{Y_1}(u)=\widetilde{G}_{Y_1Y_2}(u,1)$,
$\widetilde{G}_{Y_2}(u)=\widetilde{G}_{Y_1Y_2}(1,u)$,
and the remainder term $R_n$ satisfies
\[
\sup_{(u_1,u_2)\in[0,1]^2}|R_n(u_1,u_2)|=o_p(1).
\]
Consequently, the empirical copula process converges weakly to a Gaussian process in the space $(\ell^\infty([0,1]^2),\|\cdot\|_\infty)$.
\end{pro}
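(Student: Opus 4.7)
The plan is to reduce everything to the uniform scale by means of the probability integral transforms $U_i = F_{Y_1}(Y_{1i})$ and $V_i = F_{Y_2}(Y_{2i})$, so that $\{(U_i,V_i)\}_{i=1}^n$ are i.i.d.\ on $[0,1]^2$ with CDF $C_{Y_1Y_2}$, and then to apply the functional delta method in the spirit of \citet{Segers2012}. The starting point is the almost sure identity
\[
\widehat{C}_{Y_1Y_2}(u_1,u_2) = \widetilde{G}_{Y_1Y_2}\bigl(\widetilde{G}_{Y_1}^{-1}(u_1),\widetilde{G}_{Y_2}^{-1}(u_2)\bigr),
\]
which is a consequence of the continuity of $F_{Y_1}$ and $F_{Y_2}$. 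The classical multivariate Donsker theorem gives that $\sqrt{n}(\widetilde{G}_{Y_1Y_2}-C_{Y_1Y_2})$ converges weakly in $(\ell^\infty([0,1]^2),\|\cdot\|_\infty)$ to a tight, centered Gaussian process, jointly with its marginal specializations $\sqrt{n}(\widetilde{G}_{Y_j}-\mathrm{id})$ in $\ell^\infty([0,1])$.

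Next I would decompose $\widehat{C}_{Y_1Y_2}-C_{Y_1Y_2}=A_n+B_n$, with
\[
A_n(u_1,u_2)=\bigl(\widetilde{G}_{Y_1Y_2}-C_{Y_1Y_2}\bigr)\bigl(\widetilde{G}_{Y_1}^{-1}(u_1),\widetilde{G}_{Y_2}^{-1}(u_2)\bigr)
\]
and $B_n(u_1,u_2)=C_{Y_1Y_2}\bigl(\widetilde{G}_{Y_1}^{-1}(u_1),\widetilde{G}_{Y_2}^{-1}(u_2)\bigr)-C_{Y_1Y_2}(u_1,u_2)$. Because $\|\widetilde{G}_{Y_j}^{-1}-\mathrm{id}\|_\infty=o_p(1)$ by Glivenko-Cantelli and the limit of $\sqrt{n}(\widetilde{G}_{Y_1Y_2}-C_{Y_1Y_2})$ is asymptotically equicontinuous, one obtains $\sqrt{n}A_n(u_1,u_2)=\sqrt{n}(\widetilde{G}_{Y_1Y_2}(u_1,u_2)-C_{Y_1Y_2}(u_1,u_2))+o_p(1)$ uniformly on $[0,1]^2$. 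For $B_n$, I would expand $C_{Y_1Y_2}$ to first order in each argument using the partial derivatives $\partial C_{Y_1Y_2}/\partial u_j$ and then substitute the uniform empirical-quantile representation $\sqrt{n}(\widetilde{G}_{Y_j}^{-1}(u_j)-u_j)=-\sqrt{n}(\widetilde{G}_{Y_j}(u_j)-u_j)+o_p(1)$, which holds uniformly in $u_j\in[0,1]$. Summing these two contributions delivers the stated asymptotic representation of $\widehat{\mathbb{C}}_{Y_1Y_2}$ with $o_p(1)$ remainder $R_n$.

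The main obstacle is that Assumption~C provides continuity of $\partial C_{Y_1Y_2}/\partial u_j$ only on the open strip $\{u_j\in(0,1)\}$, so the Taylor expansion used to analyze $B_n$ is not uniform up to the four edges and the lim-sup extension of the partials has no classical differential meaning there. Following \citet{Segers2012}, I would handle this by splitting $[0,1]^2$ into an interior region $[\delta,1-\delta]^2$, where a standard mean-value argument applies and delivers the leading term, and boundary strips of width $\delta$, on which monotonicity of $C_{Y_1Y_2}$ in each argument combined with the uniform $O_p(n^{-1/2})$ fluctuations of $\widetilde{G}_{Y_j}^{-1}$ yields a negligibility bound that vanishes upon first letting $n\to\infty$ and then $\delta\downarrow 0$. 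Once the representation is secured, weak convergence of $\widehat{\mathbb{C}}_{Y_1Y_2}$ in $(\ell^\infty([0,1]^2),\|\cdot\|_\infty)$ is immediate from the continuous mapping theorem, since its right-hand side is a continuous linear functional of the jointly Donsker triple $\bigl(\sqrt{n}(\widetilde{G}_{Y_1Y_2}-C_{Y_1Y_2}),\sqrt{n}(\widetilde{G}_{Y_1}-\mathrm{id}),\sqrt{n}(\widetilde{G}_{Y_2}-\mathrm{id})\bigr)$, whose weak limit is tight and centered Gaussian.
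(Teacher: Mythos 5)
The paper gives no proof of this proposition at all---it is imported from \citet{Segers2012} (see the sentence immediately preceding the statement)---and your sketch is a faithful reconstruction of exactly that cited argument: the Stute/Segers decomposition into $A_n$ and $B_n$, asymptotic equicontinuity for $A_n$, a coordinate-wise mean-value expansion plus the Bahadur--Kiefer identity $\sqrt{n}(\widetilde{G}_{Y_j}^{-1}-\mathrm{id})=-\sqrt{n}(\widetilde{G}_{Y_j}-\mathrm{id})+o_p(1)$ for $B_n$, and the interior/boundary split with the iterated limit $n\to\infty$ then $\delta\downarrow 0$. The only point to state more carefully is the boundary step: the $O_p(n^{-1/2})$ fluctuation bound alone yields only $O_p(1)$ after rescaling by $\sqrt{n}$, so negligibility on the strips must come from $0\le \partial C_{Y_1Y_2}/\partial u_j\le 1$ combined with the smallness of $\sup_{u_j\in[0,\delta]\cup[1-\delta,1]}\sqrt{n}\,|\widetilde{G}_{Y_j}^{-1}(u_j)-u_j|$ (the quantile process vanishing at the endpoints), applied coordinate by coordinate.
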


We now show that the \emph{counterfactual copula process} $\widehat{\mathbb{C}}_{Y^*_1Y^*_2}\equiv\sqrt{n}(\widehat{C}_{Y^*_1Y^*_2}-C_{Y^*_1Y^*_2})$ has a similar representation provided the following assumptions hold.\\[0.3cm]
\noindent
\textbf{Assumption K} (Kernel)\\
The kernel function $K:\mathbb{R}^{d}\rightarrow \mathbb{R}$ satisfies the following conditions.
\begin{enumerate}[label=K\arabic*]
\item \label{K1} $K$ is of bounded variation, vanishes outside $[-1,1]^d$, and satisfies $\int K(u)\d u=1$.
\item \label{K2} There is a positive integer $r\geq 2$ such that $\int\prod_{\ell=1}^du_{\ell}^{\lambda_{\ell}}K(u)\d u=0$ for any $d$-dimensional vector $\lambda=(\lambda_1,\ldots,\lambda_{d})^\top$ of nonnegative integers with $\sum_{\ell=1}^d\lambda_{\ell}\leq r-1$.
\item \label{K3} For $u\in[-1,1]^d$, $K(u)$ is $r$-times differentiable with respect to $u$, and the derivatives are uniformly continuous and bounded.
\item \label{K4} For $u\in[-1,1]^d$, $K(u)=K(|u|)$.
\end{enumerate}
\textbf{Assumption B} (Bandwidth)\\
The sequence $\{h_{n}\}_{n=1}^{\infty}$ of bandwidths satisfies the following conditions.
\begin{enumerate}[label=B\arabic*]
\item \label{B1} $h_{n}\rightarrow 0$.
\item \label{B2} $\frac{\log n}{n^{1/2}h_{n}^{d}}\rightarrow 0$.
\item \label{B3} $n^{1/2}h_{n}^{r}\rightarrow 0$.
\end{enumerate}

\noindent
These assumptions on the kernel and bandwidth are mild and common in the literature on nonparametric and semiparametric econometrics.
To make Assumptions~\ref{B2} and~\ref{B3} valid simultaneously, we require that the order of kernel $K$ be greater than the dimension of covariates, as is imposed in \citet{Rothe2010}.
We also need technical assumptions on the support and smoothness of density and distribution functions as follows.\\[0.3cm]
\noindent
\textbf{Assumption S} (Smoothness and Support)\\
Let $\X$, $\Y_1$, $\Y_2$, $\Y_1^*$, and $\Y_2^*$ be the support of $X$, $Y_1$, $Y_2$, $Y_1^*$, and $Y_2^*$, respectively. Also let $f_Z$ denote the density function of a generic random variable $Z$ with respect to the Lebesgue measure.
\begin{enumerate}[label=S\arabic*]
\item \label{S1}
The function $f_{X}(x)$ is $r$-times differentiable with respect to $x$ on the interior of $\X$, and its derivatives are bounded and uniformly continuous.
In addition, the function $f_{X}(x)$ is bounded away from zero on $\X$.
\item \label{S2}
For all $(y_1,y_2,x)\in \Y_1\Y_2\X$, the first $r$ partial derivatives with respect to $x$ of $F_{Y_1Y_2|X}(y_1,y_2|x)$ and $f_{Y_1Y_2|X}(y_1,y_2|x)$ are bounded.
\item \label{S3}
The function $f_{X^*}(x)$ is $r$-times differentiable with respect to $x$ on the interior of $\X$, and its derivatives are bounded and uniformly continuous.\footnote{Let $f_{X^*}(x)=0$ if $x\notin\X^*$.}
\item \label{S4}
Both $\E[(\sup_{(y_1,y_2)\in\Y_1\Y_2}f_{Y_1Y_2|X}(y_1,y_2|X))^2]$ and $\E[ (f_{X^*}(X)/f_{X}(X))^2]$ are finite.
\item \label{S5}
For each $j\in\{1,2\}$, both $\Y_j$ and $\Y^*_j$ are bounded with respect to the Euclidean distance.
\item \label{S6}
The function $F_{Y^*_1Y^*_2}$ has continuous partial derivatives of order up to 2 on $\Y_1^*\Y_2^*$.
\item \label{S7}
The function $f_{Y^*_1Y^*_2}$ is bounded away from zero on $\Y_1^*\Y_2^*$.
\end{enumerate}

\begin{pro}
\label{PRO3}
Under Assumptions {\normalfont I}, {\normalfont K}, {\normalfont B}, and {\normalfont S}, the counterfactual copula process has the representation
\begin{align*}
&\widehat{\mathbb{C}}_{Y^*_1Y^*_2}(u_1,u_2)\\
={}&\sqrt{n}(\widetilde{G}_{Y^*_1Y^*_2}(u_1,u_2)-C_{Y^*_1Y^*_2}(u_1,u_2))-\sum_{j=1}^2\frac{\partial C_{Y^*_1Y^*_2}(u_1,u_2)}{\partial u_j}\sqrt{n}(\widetilde{G}_{Y^*_j}(u_j)-u_j)+R^*_n(u_1,u_2),
\end{align*}
where $\widetilde{G}_{Y^*_1}(u)=\widetilde{G}_{Y^*_1Y^*_2}(u,1)$ and
$\widetilde{G}_{Y^*_2}(u)=\widetilde{G}_{Y^*_1Y^*_2}(1,u)$ with
\[
\widetilde{G}_{Y^*_1Y^*_2}(u_1,u_2)\equiv\frac{1}{n}\sum_{i=1}^nW_{i,n}\1\{F_{Y^*_1}(Y_{1i})\leq u_1,F_{Y^*_2}(Y_{2i})\leq u_2\},
\]
and $\{W_{i,n}\}_{i=1}^n$ are the weights given immediately after \cref{CDFestimator}. The extension to $[0,1]^2$ of $\partial C_{Y^*_1Y^*_2}/\partial u_j$ is similar to that of $\partial C_{Y_1Y_2}/\partial u_j$ for $j\in\{1,2\}$ , and the remainder term $R^*_n$ satisfies
\[
\sup_{(u_1,u_2)\in[0,1]^2}|R^*_n(u_1,u_2)|=o_p(1).
\]
\end{pro}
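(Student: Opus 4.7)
The plan is to apply the functional delta method to the copula-construction map $\Phi:(F,F_1,F_2)\mapsto F(F_1^{-1}(\cdot),F_2^{-1}(\cdot))$, evaluated at $(F_{Y_1^*Y_2^*},F_{Y_1^*},F_{Y_2^*})$. Since $\widehat{C}_{Y_1^*Y_2^*}=\Phi(\widehat{F}_{Y_1^*Y_2^*},\widehat{F}_{Y_1^*},\widehat{F}_{Y_2^*})$ and $C_{Y_1^*Y_2^*}=\Phi(F_{Y_1^*Y_2^*},F_{Y_1^*},F_{Y_2^*})$, the stated representation will follow from two ingredients: (a) Hadamard differentiability of $\Phi$ at the population triple, with an explicit derivative; and (b) a uniform $\sqrt n$-expansion of the three input processes $\sqrt n(\widehat{F}_{Y_1^*Y_2^*}-F_{Y_1^*Y_2^*},\widehat{F}_{Y_1^*}-F_{Y_1^*},\widehat{F}_{Y_2^*}-F_{Y_2^*})$.

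For (a), first verify that the analog of Assumption~C holds for $C_{Y_1^*Y_2^*}$: by \ref{S6} and \ref{S7}, the chain-rule identity
\[
\frac{\partial C_{Y_1^*Y_2^*}(u_1,u_2)}{\partial u_j}=\frac{\partial F_{Y_1^*Y_2^*}(F_{Y_1^*}^{-1}(u_1),F_{Y_2^*}^{-1}(u_2))/\partial y_j}{f_{Y_j^*}(F_{Y_j^*}^{-1}(u_j))}
\]
gives existence and continuity of the partial derivatives of $C_{Y_1^*Y_2^*}$ on $\{(u_1,u_2)\in[0,1]^2:u_j\in(0,1)\}$. Under this condition, \citet{Buecher2013} delivers Hadamard differentiability of $\Phi$, tangentially to a suitable space of continuous functions, with derivative equal to the first-component input evaluated at the population quantile transforms minus the two partial-derivative-weighted perturbations of the marginals.

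For (b), extend the uniform expansion of \citet{Rothe2010} for the marginal counterfactual CDF estimator to the bivariate case. Writing $F_{Y_1^*Y_2^*}(y_1,y_2)=\E[F_{Y_1Y_2\mid X}(y_1,y_2\mid X^*)]$ by \ref{I1} and \ref{I2}, I decompose $\widehat{F}_{Y_1^*Y_2^*}-F_{Y_1^*Y_2^*}$ into a leading kernel-smoothed linearization plus a stochastic remainder. The smoothing bias is $O(h_n^r)=o(n^{-1/2})$ by \ref{K2}, \ref{B3}, and the derivative bounds in \ref{S2} and \ref{S3}; the stochastic remainder is $o_p(n^{-1/2})$ uniformly over $(y_1,y_2)\in\Y_1\Y_2$ (which is bounded by \ref{S5}) by \ref{K1}, \ref{K3}, \ref{B1}, \ref{B2}, and the density conditions \ref{S1} and \ref{S4}, using that the bivariate indicator class is VC and that standard maximal inequalities control kernel-weighted empirical processes over such classes. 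The marginal expansions follow as special cases by integrating out $y_1$ or $y_2$.

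To match the stated representation, use the continuity and strict monotonicity of each $F_{Y_j^*}$ (from \ref{S6} and \ref{S7}) to rewrite $\widehat{F}_{Y_1^*Y_2^*}(F_{Y_1^*}^{-1}(u_1),F_{Y_2^*}^{-1}(u_2))=\widetilde{G}_{Y_1^*Y_2^*}(u_1,u_2)$ and $\widehat{F}_{Y_j^*}(F_{Y_j^*}^{-1}(u_j))=\widetilde{G}_{Y_j^*}(u_j)$, then substitute into the Hadamard derivative from (a) to obtain the claimed linearization with uniformly $o_p(1)$ remainder $R_n^*$. The main obstacle is step (b): while the marginal case is handled in \citet{Rothe2010}, the bivariate kernel-weighted plug-in $\widehat{F}_{Y_1^*Y_2^*}$ demands stochastic equicontinuity over the two-dimensional index $(y_1,y_2)$, and the random Nadaraya--Watson denominators in $W_{i,n}$ must be controlled uniformly via the density lower bound in \ref{S1} together with entropy arguments suited to the bivariate empirical process.
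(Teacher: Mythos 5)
Your proposal is correct and lands on the same foundation as the paper, but it takes a different route at the copula level. Both arguments ultimately rest on the uniform asymptotically linear expansion of the bivariate kernel-weighted estimator $\widehat{F}_{Y_1^*Y_2^*}$ extending \citet{Rothe2010}; this is the paper's \cref{LEM1}, and it is precisely the ``main obstacle'' you flag in your step (b) — the paper resolves it with degenerate second-order U-statistics and Euclidean-class (Sherman-type) maximal inequalities to control the random Nadaraya--Watson denominators uniformly over $(y_1,y_2)$, which is the concrete form of the entropy argument you gesture at. Where you diverge is in how the CDF-level expansion is transferred to the copula: you invoke the abstract Hadamard differentiability of the copula mapping from \citet{Buecher2013}, after checking via the chain rule under \ref{S6} and \ref{S7} that the analogue of Assumption C holds for $C_{Y_1^*Y_2^*}$, whereas the paper unrolls that differentiability by hand. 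It splits $\widehat{\mathbb{C}}_{Y_1^*Y_2^*}$ into the term $\sqrt{n}\bigl(F_{Y_1^*Y_2^*}(\widehat{F}_{Y_1^*}^{-1}(u_1),\widehat{F}_{Y_2^*}^{-1}(u_2))-F_{Y_1^*Y_2^*}(F_{Y_1^*}^{-1}(u_1),F_{Y_2^*}^{-1}(u_2))\bigr)$ — handled by a Bahadur representation of the counterfactual quantile processes plus a Taylor bound whose quadratic remainder is $O_p(n^{-1/2})$ thanks to \cref{LEM2} — and the term $\sqrt{n}(\widehat{F}_{Y_1^*Y_2^*}-F_{Y_1^*Y_2^*})$ evaluated at the estimated quantiles, handled by stochastic equicontinuity plus uniform consistency of $\widehat{F}_{Y_j^*}^{-1}$. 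Your route is shorter on paper but shifts the burden to verifying the tangential conditions of the differentiability theorem (continuity of the limit of $\sqrt{n}(\widehat{F}_{Y_1^*Y_2^*}-F_{Y_1^*Y_2^*})$ and its boundary behaviour on $[0,1]^2$, which do hold here since $\sum_{i}W_{i,n}=n$ and the supports are bounded by \ref{S5}); the paper's explicit decomposition buys concrete remainder terms and rates that are reused downstream, notably in the bootstrap argument for \cref{THM2}. The assumptions consumed and the final representation are the same either way.
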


According to \cref{PRO3}, the weak convergence of $\widehat{\mathbb{C}}_{Y^*_1Y^*_2}$ in $(\ell^\infty([0,1]^2),\|\cdot\|_\infty)$ is guaranteed by that of $\sqrt{n}(\widetilde{G}_{Y^*_1Y^*_2}-C_{Y^*_1Y^*_2})$ in the same space. Notice that $\widetilde{G}_{Y^*_1Y^*_2}(u_1,u_2)
=\widehat{F}_{Y_1^*Y_2^*}(F_{Y_1^*}^{-1}(u_1),F_{Y_2^*}^{-1}(u_2))$ by construction, where $\widehat{F}_{Y_1^*Y_2^*}$ is the joint counterfactual CDF estimator defined in \cref{CDFestimator}.
It follows that
\begin{align*}
&\sqrt{n}(\widetilde{G}_{Y^*_1Y^*_2}(u_1,u_2)-C_{Y^*_1Y^*_2}(u_1,u_2))\\
={}&\sqrt{n}(\widehat{F}_{Y_1^*Y_2^*}(F_{Y_1^*}^{-1}(u_1),F_{Y_2^*}^{-1}(u_2))-F_{Y_1^*Y_2^*}(F_{Y_1^*}^{-1}(u_1),F_{Y_2^*}^{-1}(u_2))).
\end{align*}
Hence, the weak convergence of $\sqrt{n}(\widetilde{G}_{Y^*_1Y^*_2}-C_{Y^*_1Y^*_2})$ in $(\ell^\infty([0,1]^2),\|\cdot\|_\infty)$ is equivalent to that of $\sqrt{n}(\widehat{F}_{Y_1^*Y_2^*}-F_{Y_1^*Y_2^*})$ in $(\ell^\infty(\Y_1\Y_2),\|\cdot\|_\infty)$. To show the latter weak convergence, we extend the representation of the univariate counterfactual CDF estimator $\widehat{F}_{Y_j^*}$ obtained in \cite{Rothe2010} to that of joint counterfactual CDF estimator $\widehat{F}_{Y_1^*Y_2^*}$ in \cref{LEM1} in \cref{Appendix}.

Since the policy effects on the copula and the association measures are based on comparing the actual and counterfactual copulas, we study the asymptotic behavior of the random map
\[
\begin{bmatrix}
u_1\\
u_2
\end{bmatrix}
\mapsto
\begin{bmatrix}
\sqrt{n}(\widehat{C}_{Y_1Y_2}(u_1,u_2)-C_{Y_1Y_2}(u_1,u_2))\\
\sqrt{n}(\widehat{C}_{Y^*_1Y^*_2}(u_1,u_2)-C_{Y^*_1Y^*_2}(u_1,u_2))
\end{bmatrix}
=
\begin{bmatrix}
\widehat{\mathbb{C}}_{Y_1Y_2}(u_1,u_2)\\
\widehat{\mathbb{C}}_{Y^*_1Y^*_2}(u_1,u_2)
\end{bmatrix}.
\]
\cref{PRO2,PRO3} allow the random map to be approximated by the scaled sum of two-dimensional influence functions, which will be used to prove \cref{THM1} below.\footnote{The notation $\Rightarrow$ stands for weak convergence in a functional space equipped with an appropriate norm.}

\begin{thm}
\label{THM1}
Suppose that Assumptions {\normalfont I}, {\normalfont C}, {\normalfont K}, {\normalfont B}, and {\normalfont S} hold.
Then in the space $(\ell^\infty([0,1]^2),\|\cdot\|_\infty)\times(\ell^\infty([0,1]^2),\|\cdot\|_\infty)$,
\[
(\widehat{\mathbb{C}}_{Y_1Y_2},\widehat{\mathbb{C}}_{Y^*_1Y^*_2})^\top
\Rightarrow (\mathbb{C}_{Y_1Y_2},\mathbb{C}_{Y^*_1Y^*_2})^\top\equiv\mathbb{C},
\]
where $\mathbb{C}$ is a two-dimensional centered Gaussian process.

Furthermore, let $\nu$ be a functional mapping from $(\ell^\infty([0,1]^2),\|\cdot\|_\infty)$ to some normed space $(\mathcal{V},\|\cdot\|_{\mathcal{V}})$.
Suppose that $\nu$ is Hadamard differentiable at $C_{Y_1Y_2}$ and $C_{Y^*_1Y^*_2}$.
Then in the space $(\mathcal{V},\|\cdot\|_{\mathcal{V}})\times(\mathcal{V},\|\cdot\|_{\mathcal{V}})$,
\begin{align*}
(\sqrt{n}(\nu(\widehat{C}_{Y_1Y_2})-\nu(C_{Y_1Y_2})),
\sqrt{n}(\nu(\widehat{C}_{Y^*_1Y^*_2})-\nu(C_{Y^*_1Y^*_2})))^\top
\Rightarrow(\mathbb{V}_{Y_1Y_2},\mathbb{V}_{Y^*_1Y^*_2})^\top\equiv\mathbb{V},
\end{align*}
where $\mathbb{V}$ is a two-dimensional centered Gaussian process.
\end{thm}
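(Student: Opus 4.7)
The plan is to combine the asymptotic representations supplied by \cref{PRO2,PRO3} into a single empirical-process expansion and then finish via the functional delta method.

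First, by \cref{PRO2}, $\widehat{\mathbb{C}}_{Y_1Y_2}$ is, up to a uniform $o_p(1)$ error, a linear functional of the oracle empirical process
\[
\mathbb{G}_n(u_1,u_2)\equiv\sqrt{n}\bigl(\widetilde{G}_{Y_1Y_2}(u_1,u_2)-C_{Y_1Y_2}(u_1,u_2)\bigr),
\]
which is a standard empirical process indexed by the Donsker class of lower-left rectangles in $[0,1]^2$ under the distribution of $(F_{Y_1}(Y_1),F_{Y_2}(Y_2))$. By \cref{PRO3}, $\widehat{\mathbb{C}}_{Y_1^*Y_2^*}$ admits a parallel expansion in terms of $\mathbb{G}_n^*(u_1,u_2)\equiv\sqrt{n}(\widetilde{G}_{Y_1^*Y_2^*}(u_1,u_2)-C_{Y_1^*Y_2^*}(u_1,u_2))$. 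Invoking Lemma~1 in \cref{Appendix} (the influence-function linearization of $\widehat{F}_{Y_1^*Y_2^*}$), I can write
\[
\mathbb{G}_n^*(u_1,u_2)=\frac{1}{\sqrt{n}}\sum_{i=1}^n\psi_{u_1,u_2}(Y_{1i},Y_{2i},X_i,X_i^*)+o_p(1)
\]
uniformly in $(u_1,u_2)\in[0,1]^2$, where $\psi_{u_1,u_2}$ is the mean-zero influence function arising from the kernel-based smoothing, evaluated at the deterministic counterfactual quantiles $(F_{Y_1^*}^{-1}(u_1),F_{Y_2^*}^{-1}(u_2))$. Assumptions~\ref{K1}--\ref{K4}, \ref{B1}--\ref{B3}, and \ref{S1}--\ref{S7} provide the regularity (bandwidth rates, smoothness, bounded supports, and moment bounds) needed to make this representation hold uniformly and to control the bias term via \ref{B3}.

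Next, joint weak convergence of $(\mathbb{G}_n,\mathbb{G}_n^*)^\top$ in $(\ell^\infty([0,1]^2),\|\cdot\|_\infty)^2$ follows by verifying that the combined class
\[
\mathcal{F}=\bigl\{\,(y_1,y_2,x,x^*)\mapsto \bigl(\mathbb{1}\{F_{Y_1}(y_1)\le u_1,F_{Y_2}(y_2)\le u_2\}-C_{Y_1Y_2}(u_1,u_2),\ \psi_{u_1,u_2}(y_1,y_2,x,x^*)\bigr):(u_1,u_2)\in[0,1]^2\,\bigr\}
\]
is Donsker. The first coordinate class is VC and hence Donsker; the second is Donsker by standard bracketing arguments using Assumptions~\ref{S4} and~\ref{S6}, together with uniform continuity of $\psi$ in $(u_1,u_2)$ inherited from the continuous differentiability of $F_{Y_1^*Y_2^*}$. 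Plugging these joint limits back into the expansions from \cref{PRO2,PRO3} (which, being continuous linear maps of their respective oracle processes up to $o_p(1)$, preserve joint weak convergence) yields
\[
(\widehat{\mathbb{C}}_{Y_1Y_2},\widehat{\mathbb{C}}_{Y_1^*Y_2^*})^\top\Rightarrow(\mathbb{C}_{Y_1Y_2},\mathbb{C}_{Y_1^*Y_2^*})^\top,
\]
a tight centered Gaussian process in the product space.

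For the second statement, I would apply the functional delta method to the product map $(C,\tilde C)\mapsto(\nu(C),\nu(\tilde C))$, which is Hadamard differentiable at $(C_{Y_1Y_2},C_{Y_1^*Y_2^*})$ because $\nu$ is assumed Hadamard differentiable at each argument. Since $(\widehat{C}_{Y_1Y_2},\widehat{C}_{Y_1^*Y_2^*})^\top$ converges weakly at rate $\sqrt{n}$ to the Gaussian process above, the delta method delivers
\[
\bigl(\sqrt{n}(\nu(\widehat{C}_{Y_1Y_2})-\nu(C_{Y_1Y_2})),\sqrt{n}(\nu(\widehat{C}_{Y_1^*Y_2^*})-\nu(C_{Y_1^*Y_2^*}))\bigr)^\top\Rightarrow(\nu'_{C_{Y_1Y_2}}(\mathbb{C}_{Y_1Y_2}),\nu'_{C_{Y_1^*Y_2^*}}(\mathbb{C}_{Y_1^*Y_2^*}))^\top,
\]
which is jointly Gaussian by linearity of the Hadamard derivatives. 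I expect the main obstacle to be the uniform linearization underpinning $\mathbb{G}_n^*$: controlling the quantile-estimation effect $\widehat{F}_{Y_j^*}^{-1}-F_{Y_j^*}^{-1}$ entering the counterfactual copula so that its contribution is absorbed into the partial-derivative correction term of \cref{PRO3}, while simultaneously keeping the kernel bias negligible under \ref{B3}; the Donsker and cross-covariance computations, by contrast, are routine once the linearization is in hand.
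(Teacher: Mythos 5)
Your proposal is correct and follows essentially the same route as the paper: Propositions 2 and 3 for the asymptotic representations, Lemma 1 for the influence-function linearization of $\widehat{F}_{Y_1^*Y_2^*}$, a Donsker argument for the underlying oracle processes, and the functional delta method for the Hadamard-differentiable functional. If anything, your explicit treatment of the joint convergence via the combined two-dimensional class of influence functions is slightly more careful than the paper's written proof, which establishes the two marginal convergences and relies on the joint influence-function approximation stated just before the theorem.
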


\noindent
Applying \cref{THM1} and the continuous mapping theorem, we obtain $\sqrt{n}(\widehat{\Delta}_{C}-\Delta_{C})\Rightarrow (-1,1)\mathbb{C}$ in the space $(\ell^\infty([0,1]^2),\|\cdot\|_\infty)$ and $\sqrt{n}(\widehat{\Delta}_{\nu}-\Delta_{\nu})\Rightarrow (-1,1)\mathbb{V}$ in the space $(\mathcal{V},\|\cdot\|_{\mathcal{V}})$.

To find confidence bands for $C_{Y^*_1Y^*_2}$, $\Delta_{C}$, and $\Delta_{\nu}$,
we apply a \emph{nonparametric bootstrap} procedure.
Let $\{(Y_{b,1i},Y_{b,2i},X_{b,i},X_{b,i}^*)\}_{i=1}^n$ be the bootstrap data with sample size $n$ drawn with replacement from the data $\{(Y_{1i},Y_{2i},X_i,X_i^*)\}_{i=1}^n$ at hand. By replacing the data at hand with the bootstrap data in \cref{copulaestimator,empiricalcopula}, we construct the bootstrap counterfactual copula estimator $\widehat{C}^{b}_{Y_1^*Y_2^*}$ and bootstrap empirical copula estimator $\widehat{C}^{b}_{Y_1Y_2}$, respectively. Additionally, we define the \emph{bootstrap counterfactual copula process} to be $\widehat{\mathbb{C}}^{b}_{Y^*_1Y^*_2}
\equiv\sqrt{n}(\widehat{C}^{b}_{Y^*_1Y^*_2}-\widehat{C}_{Y^*_1Y^*_2})$
and \emph{bootstrap empirical copula process} $\widehat{\mathbb{C}}^{b}_{Y_1Y_2}\equiv\sqrt{n}(\widehat{C}^{b}_{Y_1Y_2}-\widehat{C}_{Y_1Y_2})$. To validate the nonparametric bootstrap for statistical inference, we prove the following theorem.

\begin{thm}
\label{THM2}
Under the assumptions in \cref{THM1}, in the space $(\ell^\infty([0,1]^2),\|\cdot\|_\infty)\times(\ell^\infty([0,1]^2),\|\cdot\|_\infty)$,
\[
(\widehat{\mathbb{C}}^{b}_{Y_1Y_2},\widehat{\mathbb{C}}^{b}_{Y^*_1Y^*_2})^\top\Rightarrow(\mathbb{C}_{Y_1Y_2},\mathbb{C}_{Y^*_1Y^*_2})^\top=\mathbb{C},
\]
conditional on the data $\{(Y_{1i},Y_{2i},X_i,X_i^*)\}_{i=1}^n$.

Moreover, if the functional mapping $\nu$ from $(\ell^\infty([0,1]^2),\|\cdot\|_\infty)$ to some normed space $(\mathcal{V},\|\cdot\|_{\mathcal{V}})$ is Hadamard differentiable at $C_{Y_1Y_2}$ and $C_{Y^*_1Y^*_2}$, then in the space $(\mathcal{V},\|\cdot\|_{\mathcal{V}})\times (\mathcal{V},\|\cdot\|_{\mathcal{V}})$,
\[
(\sqrt{n}(\nu(\widehat{C}^{b}_{Y_1Y_2})-\nu(\widehat{C}_{Y_1Y_2})),
\sqrt{n}(\nu(\widehat{C}^{b}_{Y^*_1Y^*_2})-\nu(\widehat{C}_{Y^*_1Y^*_2})))^\top\Rightarrow(\mathbb{V}_{Y_1Y_2},\mathbb{V}_{Y^*_1Y^*_2})^\top=\mathbb{V},
\]
conditional on the data $\{(Y_{1i},Y_{2i},X_i,X_i^*)\}_{i=1}^n$.
\end{thm}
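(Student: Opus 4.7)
The plan is to prove Theorem 2 by mirroring the argument that underlies Theorem 1 (together with Propositions 2 and 3), showing that each step continues to hold \emph{conditionally on the data} for the bootstrap processes. The common thread is that, under Assumption C, the copula map is Hadamard differentiable, so bootstrap validity for both copula processes reduces to bootstrap validity for the corresponding ``oracle'' processes $\sqrt n(\widetilde G^{b}_{Y_1Y_2}-\widetilde G_{Y_1Y_2})$ and $\sqrt n(\widetilde G^{b}_{Y_1^*Y_2^*}-\widetilde G_{Y_1^*Y_2^*})$. Once both are established jointly, the conclusion for the functional $\nu$ follows from the functional delta method for the bootstrap (van der Vaart and Wellner, Theorem 3.9.11).

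First I would establish a bootstrap analogue of Proposition 2. Since the class of indicators defining $\widetilde G_{Y_1Y_2}$ is Donsker, the i.i.d.\ nonparametric bootstrap yields weak convergence of $\sqrt n(\widetilde G^{b}_{Y_1Y_2}-\widetilde G_{Y_1Y_2})$ to the same centered Gaussian limit as $\sqrt n(\widetilde G_{Y_1Y_2}-C_{Y_1Y_2})$ conditionally on the data. Composing this with the Hadamard differentiability of the copula map in \citet{Buecher2013} and applying the bootstrap delta method produces
\begin{align*}
\widehat{\mathbb{C}}^{b}_{Y_1Y_2}(u_1,u_2)
={}&\sqrt n\bigl(\widetilde G^{b}_{Y_1Y_2}(u_1,u_2)-\widetilde G_{Y_1Y_2}(u_1,u_2)\bigr)\\
&-\sum_{j=1}^{2}\frac{\partial C_{Y_1Y_2}(u_1,u_2)}{\partial u_j}\sqrt n\bigl(\widetilde G^{b}_{Y_j}(u_j)-\widetilde G_{Y_j}(u_j)\bigr)+R^{b}_n(u_1,u_2),
\end{align*}
with $\sup_{(u_1,u_2)\in[0,1]^{2}}|R^{b}_n(u_1,u_2)|=o_p(1)$ in conditional probability.

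The novel content is a bootstrap analogue of Proposition 3 for the counterfactual copula, which I would attack in three substeps. (i) Derive a bootstrap version of Lemma~1 in the appendix, linearizing $\sqrt n(\widehat F^{b}_{Y_1^*Y_2^*}-\widehat F_{Y_1^*Y_2^*})$ into a bootstrap empirical average of the same influence function (involving $F_{Y_1Y_2\mid X}$ and the density ratio $f_{X^{*}}/f_{X}$) used in the proof of Proposition~3. (ii) Show that the bias and stochastic equicontinuity remainders generated by the bootstrap kernel estimator $\widehat F^{b}_{Y_1Y_2\mid X}$ are uniformly $o_p(n^{-1/2})$ conditionally on the data; this uses Assumptions K, B, and S together with maximal inequalities for VC-type classes and for bootstrapped U-processes, exploiting the exchangeability of the multinomial bootstrap weights. (iii) Invoke the conditional multiplier CLT for i.i.d.\ weights to conclude that the leading linear term converges to the same centered Gaussian limit as its unbootstrapped counterpart. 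Translating through the identity $\widetilde G^{b}_{Y_1^*Y_2^*}(u_1,u_2)=\widehat F^{b}_{Y_1^*Y_2^*}(\widehat F^{-1}_{Y_1^*}(u_1),\widehat F^{-1}_{Y_2^*}(u_2))$ and re-applying the Hadamard differentiability of the copula map delivers the starred analogue of the representation above, with a remainder $R^{*,b}_n$ that is uniformly $o_p(1)$ in conditional probability.

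Joint conditional weak convergence on the product space follows because both bootstrap processes are asymptotically equivalent to continuous linear functionals of one and the same bootstrap empirical process based on $\{(Y_{b,1i},Y_{b,2i},X_{b,i},X_{b,i}^{*})\}_{i=1}^{n}$; the influence functions are exactly those appearing in the proof of Theorem~1, so the limiting joint law coincides with $\mathbb{C}$. The statement for $\nu$ is then immediate from the functional delta method for the bootstrap: by the first conclusion and Hadamard differentiability of $\nu$ at $C_{Y_1Y_2}$ and $C_{Y_1^*Y_2^*}$, the centered bootstrap functional vector converges conditionally on the data to the same Gaussian limit $\mathbb{V}$ as in Theorem~1. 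The main obstacle is the bootstrap uniform negligibility of the kernel-smoothing remainder in substep (ii); the remaining pieces are essentially functional-analytic wrappers around that core nonparametric lemma.
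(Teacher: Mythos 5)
Your proposal is correct and follows essentially the same route as the paper: reduce both copula processes to empirical processes indexed by Donsker classes and then invoke the bootstrap consistency theorem for Donsker classes together with the functional delta method for the bootstrap (Theorems 3.6.1 and 3.9.11 of van der Vaart and Wellner). In fact you are more explicit than the paper, whose proof compresses your substeps (i)--(iii) --- the bootstrap analogues of \cref{PRO3} and \cref{LEM1} and the conditional negligibility of the kernel-smoothing remainders --- into the single assertion that the processes are ``asymptotically equivalent'' to Donsker-class empirical processes.
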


\cref{THM2} allows us to construct confidence bands for the counterfactual copula $C_{Y_1^*Y_2^*}$ and the actual copulas $C_{Y_1Y_2}$.
It can also be applied to the construction of confidence intervals for an association measure $\nu(C_{Y^{*}_{1}Y^{*}_{2}})$ and a policy effect $\Delta_{\nu}= \nu(C_{Y_1^*Y_2^*})-\nu(C_{Y_1Y_2})$ for any functional mapping $\nu:(\ell^\infty([0,1]^2),\|\cdot\|_\infty)\to(\R,|\cdot|)$ that is Hadamard differentiable at $C_{Y_1Y_2}$ and $C_{Y^*_1Y^*_2}$. Details about the construction are deferred to the next section.

\section{Monte Carlo Simulation}
\label{Simulation}
We evaluate the finite-sample properties of the actual copula and counterfactual copula estimators.
We also investigate bootstrap validity for inference on the association measures and corresponding policy effects.
The data generating process under consideration is
\begin{align*}
Y_1=3+2X-\varepsilon_1\;\;\text{and}\;\;
Y_2=1+3X+2\varepsilon_2,
\end{align*}
where $(X,\varepsilon_1,\varepsilon_2)^\top$ follows a trivariate normal distribution with zero mean and covariance matrix
\[
\begin{bmatrix}
1	&	0	&	0	\\
0	&	1	&	0.5	\\
0	&	0.5	&	1	\\
\end{bmatrix}.
\]
Clearly, $(Y_1,Y_2)^\top$ is bivariate normally distributed and the actual copula is a Gaussian copula $C_{Y_1Y_2}(u_1,u_2)=\Phi_R(\Phi^{-1}(u_1),\Phi^{-1}(u_2))$, where $\Phi^{-1}$ is the inverse function of the standard normal CDF and $\Phi_R$ is the joint CDF of a bivariate normal with zero mean and correlation matrix
\[
R=
\begin{bmatrix}
1	&	\frac{\sqrt{65}}{13}	\\
\frac{\sqrt{65}}{13}	&	1	\\
\end{bmatrix}.
\]
We consider an exogenous policy intervention
\[
X^*=\pi(X)=0.5X,
\]
and this intervention does not affect the unobservables $(\varepsilon_1,\varepsilon_2)^\top$.
Simple algebra shows that the counterfactual copula is also a Gaussian copula $C_{Y_1^*Y_2^*}(u_1,u_2)=\Phi_{R^*}(\Phi^{-1}(u_1),\Phi^{-1}(u_2))$ with correlation matrix
\[
R^*=
\begin{bmatrix}
1	&	\frac{\sqrt{2}}{10}	\\
\frac{\sqrt{2}}{10}	&	1	\\
\end{bmatrix}.
\]
The data generating process and the counterfactual policy are not meant to mimic any data set in empirical studies; instead, they are only illustrative of the proposed method. 

Given a random sample $\{(Y_{1i},Y_{2i},X_i)\}_{i=1}^n$ of size $n$, we estimate the actual copula by the rank-based empirical copula in \citet{Genest1995} for computational efficiency; specifically,
\[
\widetilde{C}_{Y_1Y_2}(u_1,u_2)=\frac{1}{n}\sum_{i=1}^n\1\{\widehat{F}_{Y_1}(Y_{1i})\leq u_1,\widehat{F}_{Y_2}(Y_{2i})\leq u_2\}.
\]
This rank-based empirical copula $\widetilde{C}_{Y_1Y_2}$ differs from $\widehat{C}_{Y_1Y_2}$ in \cref{empiricalcopula} but their difference is first-order asymptotically negligible; see \citet{FermanianRadulovicEtAl2004}.
Similarly, we modify the proposed counterfactual copula estimator $\widehat{C}_{Y_1^*Y_2^*}$ in \cref{copulaestimator} as
\[
\widetilde{C}_{Y_1^*Y_2^*}(u_1,u_2)=\frac{1}{n}\sum_{i=1}^nW_{i,n}\1\{\widehat{F}_{Y_1^*}(Y_{1i})\leq u_1,\widehat{F}_{Y_2^*}(Y_{2i})\leq u_2\}.
\]
To construct the weights $W_{i,n}$, we use the Epanechnikov kernel with bandwidth $h=5.5s_Xn^{-1/3}$ where $s_X$ denotes the sample standard deviation of $X$.
We have also tried other kernel functions and bandwidth constants, but the results are not too sensitive to these changes.\footnote{In particular, we have tried the equivalent kernel of the local linear regression for automatic boundary correction. See \cite{Fan1996} for more details. The simulation results are broadly similar and are available upon request.}
The copula functions are estimated on the grids $\{(i/100, j/100): i, j= 0,1,\dots,100\}$.
For every sample size $n\in\{100, 200, 400\}$, we conduct 1,000 simulation replications.

\begin{table}[t]
\begin{threeparttable}
\centering
\caption{Simulation results for actual and counterfactual copula estimators.}
\label{table:copula}
\begin{tabular*}{\textwidth}{@{\extracolsep{\fill}}ccccccccc}
\toprule
MIAE	&	\multicolumn{1}{c}{Actual copula}	&	\multicolumn{2}{c}{Counterfactual copula}	\\
\cmidrule(lr){2-2}	\cmidrule(lr){3-4}
\multicolumn{1}{c}{$n$}	&	\multicolumn{1}{c}{Empirical estimator}	&	\multicolumn{1}{c}{Proposed estimator}	&	\multicolumn{1}{c}{Oracle estimator}	\\
\midrule
100		&	0.937	&	1.526	&	1.241	\\
200		&	0.657	&	1.097	&	0.866	\\
400		&	0.465	&	0.788	&	0.611	\\
\midrule
RMISE	&	\multicolumn{1}{c}{Actual copula}	&	\multicolumn{2}{c}{Counterfactual copula}	\\
\cmidrule(lr){2-2}	\cmidrule(lr){3-4}
\multicolumn{1}{c}{$n$}	&	\multicolumn{1}{c}{Empirical estimator}	&	\multicolumn{1}{c}{Proposed estimator}	&	\multicolumn{1}{c}{Oracle estimator}	\\
\midrule
100		&	1.364	&	2.090	&	1.674	\\
200		&	0.959	&	1.523	&	1.173	\\
400		&	0.678	&	1.100	&	0.829	\\
\bottomrule
\end{tabular*}
\begin{tablenotes}
\item \scriptsize{Note:
The results are obtained from 1,000 simulations in which the copula functions are estimated on the grids $\{(i/100, j/100): i, j= 0,1,\dots,100\}$. The figures are multiplied by 100 to improve readability. See texts for more estimation details.}
\end{tablenotes}
\end{threeparttable}
\end{table}

\cref{table:copula} presents the simulation results regarding the actual and counterfactual copula estimators in terms of mean integrated absolute error (MIAE) and root mean integrated squared error (RMISE).\footnote{For an estimator $\widehat{C}$ of a copula $C$, the mean integrated absolute error of $\widehat{C}$ is
\[
\text{MIAE}(\widehat{C})=\mathbb{E}\left[\int_0^1\int_0^1|\widehat{C}(u_1,u_2)-C(u_1,u_2)|\d u_1 \d u_2\right]
\]
and the root mean integrated squared error of $\widehat{C}$ is
\[
\text{RMISE}(\widehat{C})=\sqrt{\mathbb{E}\left[\int_0^1\int_0^1|\widehat{C}(u_1,u_2)-C(u_1,u_2)|^2\d u_1\d u_2\right]}.
\]}
As expected, the rank-based empirical copula $\widetilde{C}_{Y_1Y_2}$ performs well in terms of MIAE and RMISE.
It can also be seen that the MIAE and RMISE of $\widetilde{C}_{Y_1^*Y_2^*}$ shrink with increasing sample size and nearly halve as the size quadruples.
This result suggests the $\sqrt{n}$-consistency of the proposed counterfactual estimator, which is in line with our theoretical analysis.
For comparison, we also consider an \emph{oracle} counterfactual copula estimator, which would be the empirical copula of $(Y_1^*,Y_2^*)$ if the counterfactual outcomes $\{(Y_{1i}^*,Y_{2i}^*)\}_{i=1}^n$ were observed.
This oracle estimator, despite having smaller MIAE and RMISE, does not outweigh the proposed estimator considerably for moderate sample sizes.

We next turn to address the inferential questions about the measures of association and corresponding policy effects.
To tackle the inferential tasks, we implement the nonparametric bootstrap.
For $b=1,\dotsc,B$, let $(M^b_{1,n},\dotsc,M^b_{n,n})^{\top}$ be a multinomial random vector with success probabilities $(1/n,\dotsc,1/n)$.
Let $\widetilde{C}^b_{Y_1Y_2}$ and $\widetilde{C}^b_{Y_1^*Y_2^*}$ denote the bootstrap estimators of actual and counterfactual copulas, respectively.
To be specific,
\begin{align*}
\widetilde{C}^b_{Y_1Y_2}(u_1,u_2)&=\frac{1}{n}\sum_{i=1}^nM^b_{i,n}\1\{\widehat{F}^b_{Y_1}(Y_{1i})\leq u_1,\widehat{F}^b_{Y_2}(Y_{2i})\leq u_2\},\\
\widetilde{C}^b_{Y_1^*Y_2^*}(u_1,u_2)&=\frac{1}{n}\sum_{i=1}^nM^b_{i,n}W_{i,n}\1\{\widehat{F}^b_{Y_1^*}(Y_{1i})\leq u_1,\widehat{F}^b_{Y_2^*}(Y_{2i})\leq u_2\},
\end{align*}
where $\widehat{F}^b_{Y_j}(y)=n^{-1}\sum_{i=1}^nM^b_{i,n}\1\{Y_{ji}\leq y\}$ and $\widehat{F}^b_{Y_j^*}(y)=n^{-1}\sum_{i=1}^nM^b_{i,n}W_{i,n}\1\{Y_{ji}\leq y\}$ for $j\in\{1,2\}$. Also let $\widetilde{\Delta}_\nu=\nu(\widetilde{C}_{Y_1^*Y_2^*})-\nu(\widetilde{C}_{Y_1Y_2})$ and
$\widetilde{\Delta}^{b}_\nu=\nu(\widetilde{C}^b_{Y_1^*Y_2^*})-\nu(\widetilde{C}^b_{Y_1Y_2})$ be the original and bootstrap estimators of policy effect $\Delta_\nu$, respectively.
We construct the $(1-\alpha)$ confidence interval for the policy effect $\Delta_\nu$ as
\[
\left[\widetilde{\Delta}_\nu-\frac{Q_\nu^{1-\alpha,B}}{\sqrt{n}},\quad\widetilde{\Delta}_\nu+\frac{Q_\nu^{1-\alpha,B}}{\sqrt{n}}\right],
\]
where $Q_\nu^{1-\alpha,B}$ is the $(1-\alpha)$-th quantile of
\[
\left\{\left|\sqrt{n}(\widetilde{\Delta}^b_\nu-\widetilde{\Delta}_\nu)-\frac{\sqrt{n}}{B}\sum_{b'=1}^B(\widetilde{\Delta}^{b'}_\nu-\widetilde{\Delta}_\nu)\right|\right\}_{b=1}^B.
\]
We generate $B=1,000$ bootstrap replicates and set the nominal coverage level to be 95\%.
Note that following \cite{Kojadinovic2019}, we consider the centered version of $\sqrt{n}(\widetilde{\Delta}^b_\nu-\widetilde{\Delta}_\nu)$. The rationale behind centering is that the bootstrap processes, whether for copulas or their corresponding functionals, converge weakly to centered Gaussian processes as stated in \cref{THM2}. Hence, using centered replicates would always lead to better approximation in finite samples.
It is also worth noting that rather than applying large-sample approximation to these measures, we can calculate them analytically.
As shown in \cite{Meyer2013}, the bivariate Gaussianity of $C_{Y_1Y_2}$ and $C_{Y^*_1Y^*_2}$ implies that the association measures, mentioned in \cref{Model}, can be expressed in terms of the Pearson correlation coefficient $r$, which is the off-diagonal element of the correlation matrix.
Concretely, we have
\begin{align*}
\nu_\tau(r)&=\frac{2}{\pi}\arcsin(r),\\
\nu_\rho(r)&=\frac{6}{\pi}\arcsin\left(\frac{r}{2}\right),\\
\nu_\gamma(r)&=\frac{2}{\pi}\left(\arcsin\left(\frac{1+r}{2}\right)-\arcsin\left(\frac{1-r}{2}\right)\right),\\
\nu_\beta(r)&=\frac{2}{\pi}\arcsin(r).
\end{align*}

\begin{table}[t]
\begin{threeparttable}
\centering
\caption{Simulation results for estimators of association measures and policy effects.}
\label{table:measure}
\begin{tabular*}{\textwidth}{@{\extracolsep{\fill}}lrrrrrrrrr}
\toprule
		&	\multicolumn{3}{c}{MAE}	&	\multicolumn{3}{c}{RMSE}	&	\multicolumn{3}{c}{CR}	\\
\cmidrule(lr){2-4}	\cmidrule(lr){5-7}	\cmidrule(lr){8-10}
\hfill$n$	&	\multicolumn{1}{c}{100}	&	\multicolumn{1}{c}{200}	&	\multicolumn{1}{c}{400}	&	\multicolumn{1}{c}{100}	&	\multicolumn{1}{c}{200}	&	\multicolumn{1}{c}{400}	&	\multicolumn{1}{c}{100}	&	\multicolumn{1}{c}{200}	&	\multicolumn{1}{c}{400}	\\
\midrule
\multicolumn{10}{l}{\bf Kendall's tau}	\\
Actual	&	0.044	&	0.028	&	0.022	&	0.055	&	0.035	&	0.027	&	0.941	&	0.958	&	0.946	\\
Counterfactual	&	0.113	&	0.075	&	0.052	&	0.132	&	0.088	&	0.062	&	0.754	&	0.862	&	0.901	\\
Policy effect	&	0.110	&	0.074	&	0.049	&	0.116	&	0.079	&	0.053	&	0.779	&	0.863	&	0.938	\\
\multicolumn{10}{l}{\bf Spearman's rho}	\\
Actual	&	0.056	&	0.035	&	0.028	&	0.070	&	0.045	&	0.034	&	0.936	&	0.955	&	0.949	\\
Counterfactual	&	0.112	&	0.082	&	0.062	&	0.138	&	0.100	&	0.076	&	0.924	&	0.969	&	0.983	\\
Policy effect	&	0.092	&	0.072	&	0.054	&	0.105	&	0.082	&	0.061	&	0.939	&	0.985	&	0.988	\\
\multicolumn{10}{l}{\bf Gini's gamma}	\\
Actual	&	0.052	&	0.033	&	0.025	&	0.065	&	0.041	&	0.031	&	0.932	&	0.955	&	0.950	\\
Counterfactual	&	0.096	&	0.069	&	0.052	&	0.119	&	0.084	&	0.063	&	0.919	&	0.961	&	0.974	\\
Policy effect	&	0.077	&	0.059	&	0.043	&	0.086	&	0.066	&	0.048	&	0.916	&	0.964	&	0.979	\\
\multicolumn{10}{l}{\bf Blomqvist's beta}	\\
Actual	&	0.074	&	0.051	&	0.036	&	0.092	&	0.062	&	0.045	&	0.968	&	0.970	&	0.967	\\
Counterfactual	&	0.115	&	0.082	&	0.058	&	0.142	&	0.101	&	0.072	&	0.941	&	0.949	&	0.947	\\
Policy effect	&	0.084	&	0.062	&	0.043	&	0.097	&	0.070	&	0.050	&	0.969	&	0.955	&	0.933	\\
\bottomrule
\end{tabular*}
\begin{tablenotes}
\item \scriptsize{Note:
The results are obtained from 1,000 simulations and 1,000 bootstrap replicates in each simulation. The association measures and policy effects are calculated based on copula estimates on the grids $\{(i/100, j/100): i, j= 0,1,\dots,100\}$. The nominal coverage level is 95\%. See texts for more estimation details.}
\end{tablenotes}
\end{threeparttable}
\end{table}

\cref{table:measure} reports the mean absolute errors (MAE), root mean squared errors (RMSE), and confidence interval coverage rates (CR) for the estimators of actual and counterfactual association measures and the corresponding policy effects.
As expected from our theoretical analysis, both MAE and RMSE nearly halve as the sample size quadruples; furthermore, the empirical coverage rates are generally close to the pre-specified nominal level.
These results suggest that the bootstrap method performs well in small samples.

\section{Empirical Study}
\label{EmpiricalStudy}
We apply the counterfactual copula method to studying the role of college education in intergenerational income mobility in the United States.
Over the past decades, researchers have observed that the association between incomes across generations among college-educated children is weaker than that among less educated groups.
That there is a ``meritocratic power'' of college education reducing the influence of family origin has been supported by a number of empirical studies, for example \cite{Hout1988}, \cite{Torche2011}, and \cite{Chetty2020}.
While most of the early literature hinges on the intergenerational income elasticity,
this measure has its limitations as discussed in \citet{Mitnik2020}.
Alternatively, \cite{Chetty2014} advocate assessing positional mobility by intergenerational rank correlation, which is none other than Spearman's rho between parent and child income.
Besides, it is well recognized that the observed high degree of mobility among children with college education reflects a combination of selection bias (the types of students admitted) and causal effects (the returns to college education).
To address this confounding issue, \cite{Zhou2019} adopts a reweighting approach whereas \cite{Gregg2019} employ unconditional quantile regression with the inclusion of confounding factors. 
In view of these advances, we use the proposed copula method to investigate the heterogeneous effects of education on intergenerational income mobility.
This approach allows us to simultaneously control for confounding factors and explore various aspects of mobility, including Spearman's rho and other measures of association discussed in Section~\ref{Model}.

\subsection{Data}
\label{data}
The data we use are drawn from the Panel Study of Income Dynamics (PSID), one of the primary databases used in the literature on intergenerational mobility in the United States.\footnote{See \cite{Mazumder2018} for a comprehensive survey of the PSID.}
Beginning in 1968, the PSID is a longitudinal study of a representative sample of US individuals and their family units. The sample has been reinterviewed annually through 1997 and biennially thereafter to collect information on the individuals and their descendants.
As children from the original PSID families transition to adulthood and form their own households, the sample size grew from 4,800 family units in 1968 to almost double in 2019.
However, family-level variables collected in each wave of the study are contained in separate single-year family files, which need to be merged by a cross-year individual file containing all individual-level variables.
In this study, we use both family and individual files for the survey years 1968--2019. Additionally, we follow \cite{Lee2009} to exclude the Survey of Economic Opportunity (SEO) part of the PSID due to serious irregularities in the sampling of SEO respondents.

Like most of the literature, we construct permanent incomes of parents and children, $Y_1$ and $Y_2$, using at least 5-year averages of total family income across years.\footnote{See Table 2 of \cite{Mazumder2018} for a selected summary of related studies.} There are several advantages of using averages and using family income. First, averages of income likely provide a better measure of permanent income by reducing the impact of transitory shocks. Second, by using averaged data, we do not require individuals to report income every year. Third, using family income instead of individual one allows us to keep the unemployed in the analysis. Fourth, families with one high-income spouse will be properly treated as high-income families. To avoid heterogeneity in life cycle earnings profiles, we only average the total family income for parents and children aged 25 to 55 at the time of data collection. We also adjust the family income in all years to 2019 dollars using the CPI-U-RS prior to any calculations. Note that because of the use of family income for both parents and children, we cannot separate the income of children from that of their parents if they live together. Consequently, our sample consists of 3,895 households, with children being the head (or the spouse of the head) of another household in a subsequent reporting year.

In addition to income variables, we also consider several family characteristics $X$ that may affect income mobility. These covariates include gender, race, and years of schooling for the family head (parent), as well as gender, year of birth, and years of schooling for the child. \cref{table:data} presents the descriptive statistics for all variables under consideration.

\begin{table}[t]
\begin{threeparttable}
\centering
\caption{Descriptive statistics.}
\label{table:data}
\begin{tabular*}{\textwidth}{@{\extracolsep{\fill}}ld{4}d{4}d{4}d{4}d{4}}
\toprule
	&	\multicolumn{1}{c}{Mean}	&	\multicolumn{1}{c}{SD}	&	\multicolumn{1}{c}{Min}	&	\multicolumn{1}{c}{Median}	&	\multicolumn{1}{c}{Max}	\\
\midrule
\multicolumn{6}{l}{\bf Parent}	\\
Income (1000s)	&	83.518	&	54.739	&	6.292	&	74.006	&	852.418	\\
Male				&	0.891	&	0.312	&	0	&	1	&	1	\\
White			&	0.889	&	0.315	&	0	&	1	&	1	\\
Years of schooling	&	13.135	&	2.848	&	0	&	12	&	17	\\
\multicolumn{6}{l}{\bf Child}	\\
Income (1000s)	&	85.825	&	64.382	&	5.081	&	73.707	&	1134.731	\\
Male				&	0.486	&	0.500	&	0	&	0	&	1	\\
Year of birth	&	1968.122	&	10.500	&	1938	&	1968	&	1986	\\
Years of schooling	&	14.387	&	2.083	&	7	&	14	&	17	\\		
\bottomrule
\end{tabular*}
\begin{tablenotes}
\item \scriptsize{Note:
The sample consists of 3,895 PSID households. Parental characteristics refer to the family head.}
\end{tablenotes}
\end{threeparttable}
\end{table}

\subsection{Scenarios}
\label{scenarios}

We evaluate the effects of education on intergenerational income mobility in two counterfactual scenarios.
In the first scenario, children of low educational attainment are given more years of schooling such that all children receive at least a certain amount of education.
Specifically, the counterfactual covariates $X^*=(X^*_\text{cedu}, X^{*\top}_\text{other})^\top$ consist of the counterfactual children's years of schooling $X^*_\text{cedu}=\max\{X_\text{cedu},s\}$ with $X_\text{cedu}$ the actual value and $s$ the years of ``compulsory'' schooling varying from one experiment to another, and $X^*_\text{other}=X_\text{other}$ representing the other family characteristics, which are held constant.
Incrementally increasing the parameter $s$ from 13 to 16 (corresponding to the case where a college degree is conferred on all children), we can examine the heterogeneous effects of college education on income mobility by comparing the counterfactual measures of association with the actual ones.

In the second scenario, children with less educated parents are required to complete a four-year college degree.
To be specific, let $X^*_\text{cedu}=\max\{X_\text{cedu},16\}\cdot\operatorname{1}\{X_\text{pedu}\leq s'\}+X_\text{cedu}\cdot\operatorname{1}\{X_\text{pedu}>s'\}$, where $X_\text{pedu}$ is the parental years of schooling and $s'$ is the threshold.
By shifting $s'$ from 6 to 17,\footnote{Note that only 1.34\% (52 out of 3,895) of parents in the sample have less than 6 years of education.}
we can investigate the trade-off between the effectiveness of a policy and the number of children affected by this policy.
This trade-off may further provide policymakers with guidance on the public investments in education from the perspective on intergenerational income mobility.

\subsection{Results}
\label{result}

The results of our first counterfactual scenario are summarized in \cref{fig:yos}, where the actual, counterfactual, and subgroup measures of association across children's compulsory schooling parameter $s$ are plotted.\footnote{
By a subgroup measure of association, we mean the association measure for a subgroup of families with children's actual years of schooling above or equal to $s$.
}
As can be seen in Panel (a), the actual Spearman's rho, depicted by the horizontal dashed line, falls well within the 95\% confidence interval for the counterfactual Spearman's rho except for the case of $s=16$, in which all children complete a four-year college degree.
This finding indicates that, other things being equal, providing some college education to all children is unlikely to boost intergenerational income mobility,
while the receipt of a college degree can effectively reduce income persistence by about one-third of the magnitude.
In addition, the subgroup Spearman's rho for children who attend college without receiving a college degree substantially differs from the corresponding (i.e., $s=13, 14, 15$) counterfactual Spearman's rho.
Such differences suggest that the high mobility (low association) observed in each of these subgroups may be attributed to selection bias, rather than college education per se. 
These findings are robust to the use of the other measures of association, inclusive of Kendall's tau, Gini's gamma, and Blomqvist's beta, shown in Panels (b)-(d) respectively.

\begin{figure}
\centering
\begin{subfigure}{0.49\textwidth}
\centering
\includegraphics[width=\textwidth]{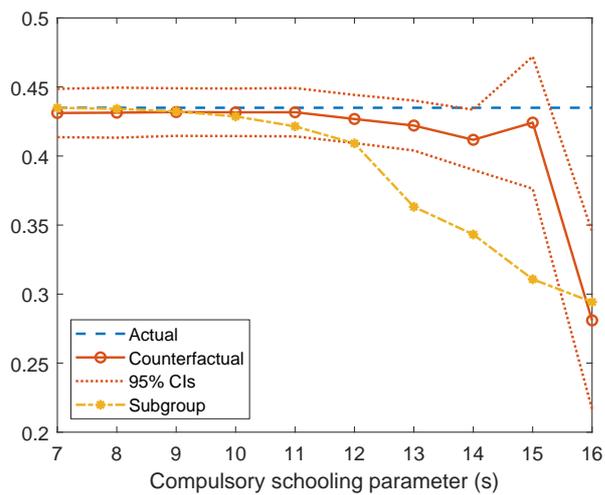}
\caption{Spearman's rho}
\end{subfigure}
\hfill
\begin{subfigure}{0.49\textwidth}
\centering
\includegraphics[width=\textwidth]{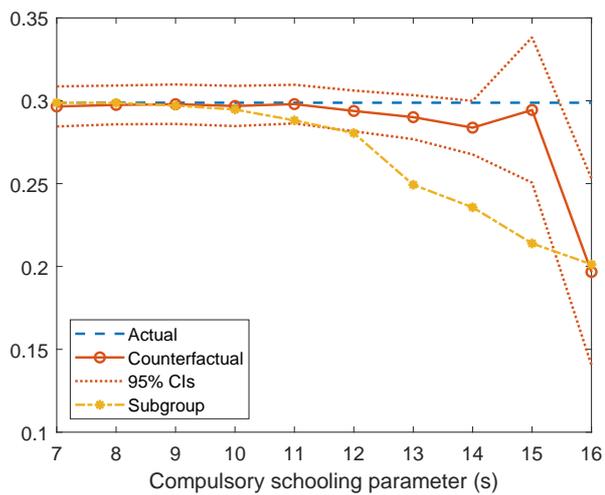}
\caption{Kendall's tau}
\end{subfigure}
\par\medskip
\begin{subfigure}{0.49\textwidth}
\centering
\includegraphics[width=\textwidth]{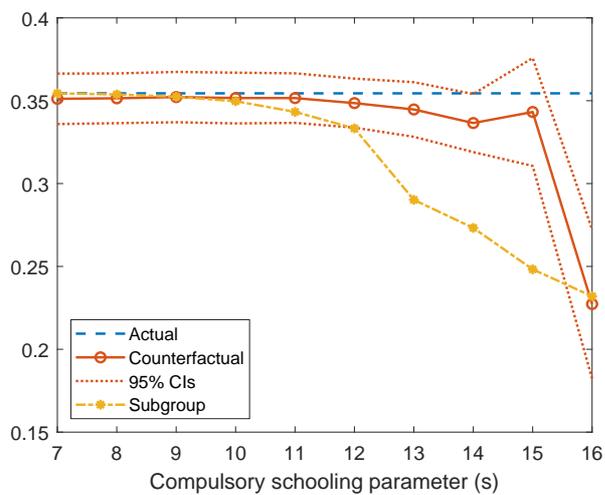}
\caption{Gini's gamma}
\end{subfigure}
\hfill
\begin{subfigure}{0.49\textwidth}
\centering
\includegraphics[width=\textwidth]{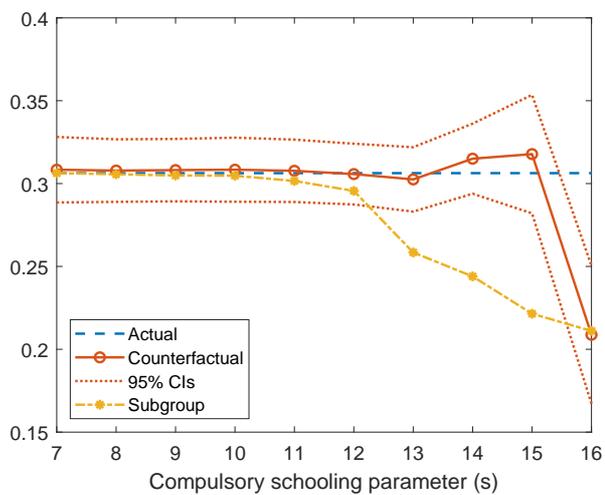}
\caption{Blomqvist's beta}
\end{subfigure}
\caption{Measures of association across children's compulsory schooling parameter.\label{fig:yos}}
\end{figure}

\cref{fig:quantile} shows the counterfactual policy effects on the four association measures in the second scenario.
As shown in Panel (a), the policy effect on Spearman's rho is significantly less than zero when the threshold parameter is set at $s'=10$.
In this setting, children with parental eduction less than, or equal to, ten years of schooling are required to complete a four-year college degree,
and those families affected by the counterfactual policy account for 14.87\% (579 out of 3,895) of the sample.
In addition, when the threshold parameter $s'$ is greater than 12, the relatively flat curve is suggestive of a negligible improvement in the policy effect on Spearman's rho.
The robustness of these findings are confirmed by the similar curves plotted in Panels (b)-(d) where the other measures of association are adopted.
Taken together, these results demonstrate that intergenerational income mobility can be effectively improved if children from less educated families (i.e., parental eduction less than or equal to ten years) are required to complete a four-year college degree.

\begin{figure}
\begin{subfigure}{0.49\textwidth}
\centering
\includegraphics[width=\textwidth]{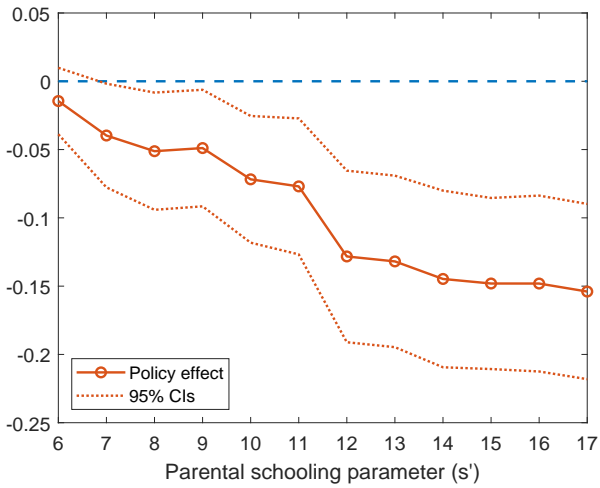}
\caption{Spearman's rho}
\end{subfigure}
\hfill
\begin{subfigure}{0.49\textwidth}
\centering
\includegraphics[width=\textwidth]{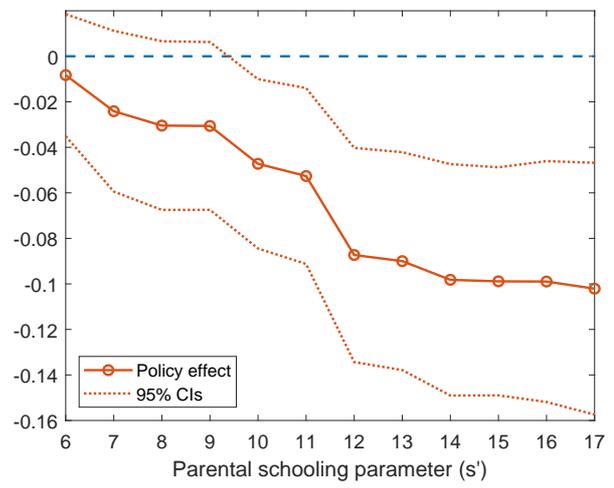}
\caption{Kendall's tau}
\end{subfigure}
\par\medskip
\begin{subfigure}{0.49\textwidth}
\centering
\includegraphics[width=\textwidth]{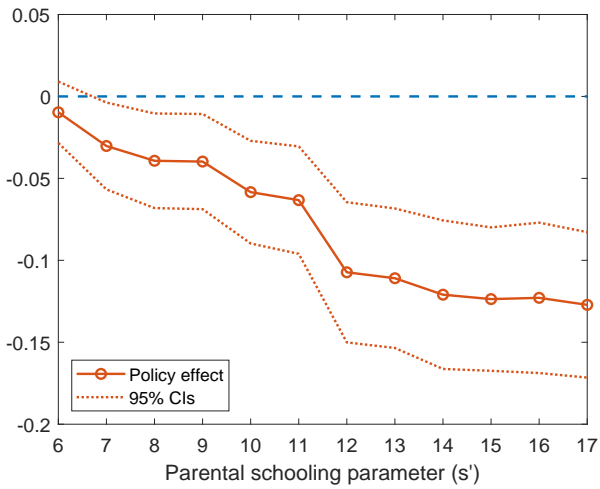}
\caption{Gini's gamma}
\end{subfigure}
\hfill
\begin{subfigure}{0.49\textwidth}
\centering
\includegraphics[width=\textwidth]{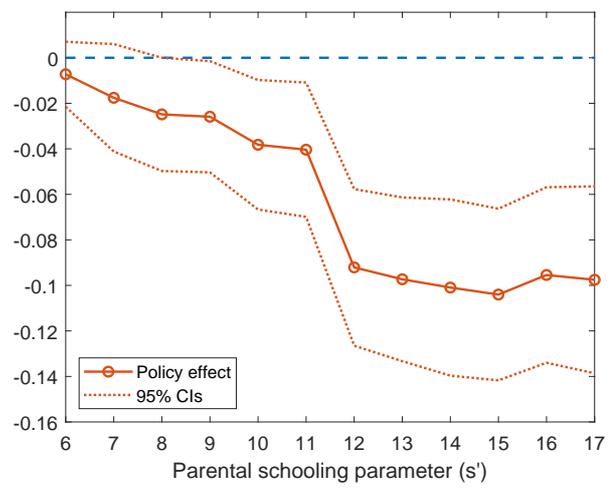}
\caption{Blomqvist's beta}
\end{subfigure}
\caption{Policy effects of a college degree for children with less educated parents on measures of association.\label{fig:quantile}}
\end{figure}

\section{Conclusion}
\label{Conclusion}
In this paper, we propose a nonparametric estimator of the counterfactual copula in response to exogenous policy intervention and prove its weak convergence.
We also establish bootstrap validity for inference on the counterfactual copula and the corresponding measures of association. Simulation results are in line with our theoretical findings.
Finally, we illustrate the use of the proposed method with an empirical application investigating the role of college education in intergenerational income mobility.
Our empirical results suggest that if children from less-educated families were given a college
degree, then intergenerational income mobility would be substantially improved.

\appendix


\begin{appendices}

\numberwithin{equation}{section}

\section{Technical proofs}
\label{Appendix}
\subsection{Proof of \cref{PRO1}}
The joint CDF of $(Y_1^*,Y_2^*)^\top$ can be identified because we have
\begin{align*}
F_{Y_1^*Y_2^*}(y_1,y_2)
&= \E[\1\{Y_1^*\leq y_1,Y_2^*\leq y_2\}]\\
&= \int \E[\1\{g_1(x,\varepsilon_1)\leq y_1,g_2(x,\varepsilon_2)\leq y_2\}|X^*=x] F_{X^*}(\d x)\\
&= \int \E[\1\{g_1(x,\varepsilon_1)\leq y_1,g_2(x,\varepsilon_2)\leq y_2\}|X=x] F_{X^*}(\d x)\\
&= \int F_{Y_1Y_2|X}(y_1,y_2|x)F_{X^*}(\d x),
\end{align*}
where the third equality holds by Assumptions~\ref{I1} and~\ref{I2}. Following similar arguments, we can also write $F_{Y_1^*}(y_1)$ and $F_{Y_2^*}(y_2)$ as the stated formulas. Since $F_{Y_1^*}(y_1)$ and $F_{Y_2^*}(y_2)$ are continuous by Assumption~\ref{I3}, the desired result follows from Sklar's theorem.

\subsection{Proof of \cref{PRO3}}
We decompose the counterfactual copula process as follows. For each $u_1$ and $u_2$ in $(0,1)$,
\begin{align*}
\widehat{\mathbb{C}}_{Y^*_1Y^*_2}(u_1,u_2)
&=\sqrt{n}(F_{Y^*_1Y^*_2}(\widehat{F}^{-1}_{Y^*_1}(u_1),\widehat{F}^{-1}_{Y^*_2}(u_2))-F_{Y^*_1Y^*_2}(F^{-1}_{Y^*_1}(u_1),F^{-1}_{Y^*_2}(u_2)))\\
&\quad+\sqrt{n}(\widehat{F}_{Y^*_1Y^*_2}(\widehat{F}^{-1}_{Y^*_1}(u_1),\widehat{F}^{-1}_{Y^*_2}(u_2))-F_{Y^*_1Y^*_2}(\widehat{F}^{-1}_{Y^*_1}(u_1),\widehat{F}^{-1}_{Y^*_2}(u_2)))\\
&\equiv\text{Term 1}+\text{Term 2}.
\end{align*}
First, we focus on Term 1. Theorem 20.8 and Lemma 21.4 of \cite{vanderVaart1998} imply that for each $j\in\{1,2\}$ and $u\in(0,1)$,
\begin{equation}
\label{AA1}
\sqrt{n}(\widehat{F}_{Y_j^*}^{-1}(u)-F_{Y_j^*}^{-1}(u))
=\frac{-1}{f_{Y_j^*}(F^{-1}_{Y_j^*}(u))}\sqrt{n}(\widehat{F}_{Y_j^*}(F^{-1}_{Y_j^*}(u))-u)+r_{j,n}(u),
\end{equation}
where the remainder satisfies $\sup_{u\in(0,1)}|r_{j,n}(u)|=o_p(1)$.
The delta method implies that
\begin{align}
\label{AA2}
&\sqrt{n}(F_{Y^*_1Y^*_2}(\widehat{F}^{-1}_{Y^*_1}(u_1),\widehat{F}^{-1}_{Y^*_2}(u_2))-F_{Y^*_1Y^*_2}(F^{-1}_{Y^*_1}(u_1),F^{-1}_{Y^*_2}(u_2)))\notag\\
={}&\sum_{j=1}^2\frac{\partial F_{Y_1^*Y_2^*}(F_{Y_1^*}^{-1}(u_1),F_{Y_2^*}^{-1}(u_2))}{\partial y_j}\sqrt{n}(\widehat{F}_{Y_j^*}^{-1}(u_j)-F_{Y_j^*}^{-1}(u_j))+\gamma_n(u_1,u_2)
\end{align}
for some remainder $\gamma_n$. It follows from \cref{LEM2} that there is a constant $c_0$ such that
\begin{align}
\label{AA3}
\sup_{(u_1,u_2)\in(0,1)^2}|\gamma_n(u_1,u_2)|
&\leq c_0\sqrt{n}\left(\sup_{u\in(0,1)}|\widehat{F}_{Y_1^*}^{-1}(u)-F_{Y_1^*}^{-1}(u)|^2
+\sup_{u\in(0,1)}|\widehat{F}_{Y_2^*}^{-1}(u)-F_{Y_2^*}^{-1}(u)|^2\right)\notag\\
&=O_p(n^{-1/2}).
\end{align}
Combining \cref{AA1,AA2,AA3}, we obtain
\begin{align}
\label{AA4}
&\sqrt{n}(F_{Y^*_1Y^*_2}(\widehat{F}^{-1}_{Y^*_1}(u_1),\widehat{F}^{-1}_{Y^*_2}(u_2))-F_{Y^*_1Y^*_2}(F^{-1}_{Y^*_1}(u_1),F^{-1}_{Y^*_2}(u_2)))\notag\\
={}&-\sum_{j=1}^2\frac{1}{f_{Y_j^*}(F^{-1}_{Y_j^*}(u_j))}
\frac{\partial F_{Y_1^*Y_2^*}(F_{Y_1^*}^{-1}(u_1),F_{Y_2^*}^{-1}(u_2))}{\partial y_j}\sqrt{n}(\widehat{F}_{Y_j^*}(F^{-1}_{Y_j^*}(u_j))-u_j)\notag\\
&+\sum_{j=1}^2\frac{\partial F_{Y_1^*Y_2^*}(F_{Y_1^*}^{-1}(u_1),F_{Y_2^*}^{-1}(u_2))}{\partial y_j}r_{j,n}(u_j)+\gamma_n(u_1,u_2)\notag\\
={}&-\sum_{j=1}^2\frac{\partial C_{Y^*_1Y^*_2}(u_1,u_2)}{\partial u_j}
\sqrt{n}(\widetilde{G}_{Y^*_j}(u_j)-u_j)+R^*_{1,n}(u_1,u_2),
\end{align}
where the remainder term
\[
R^*_{1,n}(u_1,u_2)\equiv \sum_{j=1}^2
\frac{\partial F_{Y_1^*Y_2^*}(F_{Y_1^*}^{-1}(u_1),F_{Y_2^*}^{-1}(u_2))}{\partial y_j}
r_{j,n}(u_j)+\gamma_n(u_1,u_2)
\]
satisfies $\sup_{(u_1,u_2)\in(0,1)^2}|R^*_{1,n}(u_1,u_2)|=o_p(1)$.

Next, we focus on Term 2. Let $\mathcal{D}_n\equiv\{(Y_{1i},Y_{2i},X_i)\}_{i=1}^n$. The weak convergence of $\sqrt{n}(\widehat{F}_{Y_1^*Y_2^*}-F_{Y_1^*Y_2^*})$ established in \cref{LEM1} implies that the class
\[
\{\mathcal{D}_n\mapsto\sqrt{n}(\widehat{F}_{Y_1^*Y_2^*}(y_1,y_2)
-F_{Y_1^*Y_2^*}(y_1,y_2)):(y_1,y_2)\in \Y_1\Y_2\}
\]
is stochastic equicontinuous.
In addition, $\sup_{u\in(0,1)}|\widehat{F}_{Y_j^*}^{-1}(u)-F_{Y_j^*}^{-1}(u)|=o_p(1)$ for each $j\in\{1,2\}$ by \cref{LEM2}. It follows that
\begin{align}
\label{AA5}
&\sqrt{n}(\widehat{F}_{Y_1^*Y_2^*}(\widehat{F}_{Y_1^*}^{-1}
(u_1),\widehat{F}_{Y_2^*}^{-1}(u_2))-F_{Y_1^*Y_2^*}(\widehat{F}_{Y_1^*}^{-1}(u_1),\widehat{F}_{Y_2^*}^{-1}(u_2)))\notag\\
={}&\sqrt{n}(\widehat{F}_{Y_1^*Y_2^*}(F_{Y_1^*}^{-1}(u_1),F_{Y_2^*}^{-1}(u_2))-F_{Y_1^*Y_2^*}(F_{Y_1^*}^{-1}(u_1),F_{Y_2^*}^{-1}(u_2)))+R^*_{2,n}(u_1,u_2)\notag\\
={}&\sqrt{n}(\widetilde{G}_{Y^*_1Y^*_2}(u_1,u_2)-C_{Y^*_1Y^*_2}(u_1,u_2))+R^*_{2,n}(u_1,u_2),
\end{align}
where the remainder term satisfies $\sup_{u_1,u_2}|R^*_{2,n}(u_1,u_2)|=o_p(1)$. Combining \cref{AA4,AA5} completes the proof.

\subsection{Proof of \cref{THM1}}
First, we prove the weak convergence of $(\widehat{\mathbb{C}}_{Y_1Y_2},\widehat{\mathbb{C}}_{Y^*_1Y^*_2})^\top$ by showing that of $\widehat{\mathbb{C}}_{Y_1Y_2}$ and $\widehat{\mathbb{C}}_{Y^*_1Y^*_2}$, respectively. On the one hand, the decomposition of $\widehat{\mathbb{C}}_{Y_1Y_2}$ in \cref{PRO2} implies that the weak convergence of $\widehat{\mathbb{C}}_{Y_1Y_2}$ is guaranteed by that of $\sqrt{n}(\widetilde{G}_{Y_1Y_2}-C_{Y_1Y_2})$, which is valid by Example 2.1.3 of \cite{vanderVaartWellner1996}.
On the other hand, we aim to establish the weak convergence of $\widehat{\mathbb{C}}_{Y^*_1Y^*_2}$. By \cref{PRO3} and its subsequent discussion, it is sufficient to show the weak convergence of $\sqrt{n}(\widehat{F}_{Y_1^*Y_2^*}-F_{Y_1^*Y_2^*})$ in $(\ell^\infty(\Y_1\Y_2),\|\cdot\|_\infty)$. This latter weak convergence follows from the decomposition of $\sqrt{n}(\widehat{F}_{Y_1^*Y_2^*}-F_{Y_1^*Y_2^*})$ in \cref{LEM1}, as in \cite{Rothe2010}.

Next, applying Theorem 20.8 of \cite{vanderVaart1998}, we have
\[
(\sqrt{n}(\nu(\widehat{C}_{Y_1Y_2})-\nu(C_{Y_1Y_2})),\sqrt{n}(\nu(\widehat{C}_{Y^*_1Y^*_2})-\nu(C_{Y^*_1Y^*_2})))^\top\Rightarrow \nu'_{(C_{Y_1Y_2},C_{Y^*_1Y^*_2})}(\mathbb{C})\equiv\mathbb{V},
\]
where $\nu'_{(C_{Y_1Y_2},C_{Y^*_1Y^*_2})}$ is the Hadamard derivative of $\nu$ at $(C_{Y_1Y_2},C_{Y^*_1Y^*_2})$. Since the Hadamard derivative $\nu'_{(C_{Y_1Y_2},C_{Y^*_1Y^*_2})}$ is continuous and linear, the centered Gaussianity of $\mathbb{C}$ implies that of $\mathbb{V}$.

\subsection{Proof of \cref{THM2}}
\cref{THM1} shows that the counterfactual copula process $\widehat{\mathbb{C}}_{Y^*_1Y^*_2}$ is asymptotically equivalent to an empirical process of the form
\[
f\in\mathcal{F}^*\mapsto\frac{1}{\sqrt{n}}\sum_{i=1}^n(f(Z_i)-\E[f(Z_i)]),
\]
where $Z_i\equiv(Y_{1i},Y_{2i},X_i,X^*_i)$ and $\mathcal{F}^*$ is a class of functions defined on $[0,1]^2$. By \cref{PRO3,LEM1}, $\mathcal{F}^*$ is a Donsker class. Similarly, it can be shown that the empirical copula process $\widehat{\mathbb{C}}_{Y_1Y_2}$
is asymptotically equivalent to an empirical process of the form
\[
f\in\mathcal{F}\mapsto\frac{1}{\sqrt{n}}\sum_{i=1}^n(f(Z_i)-\E[f(Z_i)]),
\]
where $\mathcal{F}$ is a Donsker class of functions defined on $[0,1]^2$. The results follow from Theorems 3.6.1 and 3.9.11 of \cite{vanderVaartWellner1996}.

\subsection{Lemmas}

\begin{lem}
\label{LEM1}
If Assumptions {\normalfont I}, {\normalfont K1-K3}, {\normalfont B}, and {\normalfont S1-S5} hold, then for each $(y_1,y_2)\in\Y_1\Y_2$,
\begin{align*}
&\sqrt{n}(\widehat{F}_{Y_1^*Y_2^*}(y_1,y_2)-F_{Y_1^*Y_2^*}(y_1,y_2))\\
={}&\frac{1}{\sqrt{n}}\sum_{i=1}^n(F_{Y_1Y_2|X}(y_1,y_2|X^*_i)-F_{Y_1^*Y_2^*}(y_1,y_2))\\
&+\frac{1}{\sqrt{n}}\sum_{i=1}^n\frac{f_{X^*}(X_i)}{f_{X}(X_i)}(
\1\{Y_{1i}\leq y_1,Y_{2i}\leq y_2\}-F_{Y_1Y_2|X}(y_1,y_2|X_i))+r^*_n(y_1,y_2),
\end{align*}
where the remainder term $r^*_n$ satisfies
\begin{align*}
\sup_{(y_1,y_2)\in\Y_1\Y_2}|r^*_n(y_1,y_2)|=o_p(1).
\end{align*}
\end{lem}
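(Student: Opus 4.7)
\bigskip
\noindent\emph{Proof plan for \cref{LEM1}.}

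The plan is to treat $\widehat{F}_{Y_1^*Y_2^*}$ as a two-step semiparametric estimator and extend \citeauthor{Rothe2010}'s (\citeyear{Rothe2010}) expansion for marginal counterfactual CDFs to the bivariate case. The starting point is the identity
\[
\widehat{F}_{Y_1^*Y_2^*}(y_1,y_2)=\frac{1}{n}\sum_{j=1}^n \widehat{F}_{Y_1Y_2|X}(y_1,y_2|X_j^*),
\]
obtained by exchanging the order of summation in \cref{CDFestimator}. This suggests the decomposition $\sqrt{n}(\widehat{F}_{Y_1^*Y_2^*}(y_1,y_2)-F_{Y_1^*Y_2^*}(y_1,y_2))=A_n(y_1,y_2)+B_n(y_1,y_2)$, where
\[
A_n(y_1,y_2)=\frac{1}{\sqrt{n}}\sum_{j=1}^n\bigl(F_{Y_1Y_2|X}(y_1,y_2|X_j^*)-F_{Y_1^*Y_2^*}(y_1,y_2)\bigr),
\]
is already the first claimed leading term, and
\[
B_n(y_1,y_2)=\frac{1}{\sqrt{n}}\sum_{j=1}^n\bigl(\widehat{F}_{Y_1Y_2|X}(y_1,y_2|X_j^*)-F_{Y_1Y_2|X}(y_1,y_2|X_j^*)\bigr).
\]
The work is to show $B_n$ equals the second leading term plus a negligible remainder, uniformly in $(y_1,y_2)$.

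The plan for $B_n$ is a standard ratio linearization of the Nadaraya--Watson estimator. Writing $\widehat{F}_{Y_1Y_2|X}=\widehat{r}/\widehat{f}_X$ and $r(y_1,y_2,x)=F_{Y_1Y_2|X}(y_1,y_2|x)f_X(x)$, a first-order expansion and Assumption~\ref{S1} give
\[
\widehat{F}_{Y_1Y_2|X}(y_1,y_2|x)-F_{Y_1Y_2|X}(y_1,y_2|x)=\frac{\widehat{r}(y_1,y_2,x)-r(y_1,y_2,x)-F_{Y_1Y_2|X}(y_1,y_2|x)(\widehat{f}_X(x)-f_X(x))}{f_X(x)}+\rho_n(y_1,y_2,x),
\]
with the higher-order term $\rho_n$ uniformly $O_p((nh^d)^{-1})=o_p(n^{-1/2})$ by \ref{B2}. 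Substituting the kernel formulas for $\widehat{r}$ and $\widehat{f}_X$ and swapping the $i$ and $j$ indices, the leading part of $B_n$ takes the form
\[
\frac{1}{\sqrt{n}}\sum_{i=1}^n\widehat{\omega}_{n,i}\bigl(\1\{Y_{1i}\leq y_1,Y_{2i}\leq y_2\}-F_{Y_1Y_2|X}(y_1,y_2|X_i)\bigr)+\text{bias},
\]
where $\widehat{\omega}_{n,i}=(nh^d)^{-1}\sum_{j=1}^n K((X_i-X_j^*)/h)/f_X(X_j^*)$. A change of variables shows $\E[\widehat{\omega}_{n,i}\mid X_i]=\int K(u)(f_{X^*}/f_X)(X_i-hu)\,\d u$, which by \ref{K2} and \ref{S3} converges to $f_{X^*}(X_i)/f_X(X_i)$ with bias $O(h^r)$; the stochastic fluctuation of $\widehat{\omega}_{n,i}$ around this limit is uniformly $o_p(1)$ by \ref{B2}. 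The bias term from $\widehat{r}$ and $\widehat{f}_X$ is of order $O(h^r)$ under \ref{S2}, so multiplying by $\sqrt{n}$ yields $o(1)$ by \ref{B3}. The kernel-constant term drops out because the centering with $F_{Y_1Y_2|X}(y_1,y_2|X_i)$ absorbs it, reproducing the claimed second influence-function term.

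Uniformity in $(y_1,y_2)\in\Y_1\Y_2$ is obtained by empirical-process arguments. The class of orthant indicators $\{(y_1,y_2,\cdot)\mapsto \1\{Y_1\le y_1,Y_2\le y_2\}\}$ is a VC class, the kernel $K$ is of bounded variation by \ref{K1}, and the integrability conditions in \ref{S4} together with the compactness in \ref{S5} produce integrable envelopes; standard bracketing entropy bounds then make all the $o_p$ statements uniform in $(y_1,y_2)$. The main obstacle is the substitution of $\widehat{\omega}_{n,i}$ by $f_{X^*}(X_i)/f_X(X_i)$ in Step~2: it requires showing that the resulting remainder is not only pointwise $o_p(n^{-1/2})$ but also uniform in $(y_1,y_2)$ after being multiplied by the indicator-centered term. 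This is handled by treating the double sum as a kernel-dependent U-statistic, using the Hoeffding projection together with the bracketing number control above, where the bandwidth condition \ref{B2} is exactly what is needed to dominate the degenerate part.
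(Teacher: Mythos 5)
Your plan is correct and follows essentially the same route as the paper: the first summand is exact by construction, the Nadaraya--Watson ratio is linearized so that the influence function $\frac{f_{X^*}(X_i)}{f_X(X_i)}(\1\{Y_{1i}\le y_1,Y_{2i}\le y_2\}-F_{Y_1Y_2|X}(y_1,y_2|X_i))$ emerges, the smoothing bias is killed by the order-$r$ kernel together with \ref{B3}, and the remaining double-sum fluctuation is controlled as a degenerate U-process uniformly in $(y_1,y_2)$ (the paper organizes this as a three-term split with a separate stochastic-equicontinuity piece and invokes Sherman's maximal inequality over Euclidean classes rather than bracketing, but these are cosmetic differences). You correctly identify the uniform control of the $\widehat{\omega}_{n,i}$-substitution as the main technical obstacle, which is exactly what the paper's Lemma 3 and the Hoeffding-type decomposition are there to handle.
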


\begin{proof}
Let $\mathcal{D}_n\equiv\{(Y_{1i},Y_{2i},X_i)\}_{i=1}^n$ and $X^*$ be a random variable that is independent of $\mathcal{D}_n$ and identically distributed as $X^*_1$. Simple algebra shows that for each $(y_1,y_2)\in\Y_1\Y_2$,
\begin{align*}
&\sqrt{n}(\widehat{F}_{Y_1^*Y_2^*}(y_1,y_2)-F_{Y_1^*Y_2^*}(y_1,y_2))\\
={}&\frac{1}{\sqrt{n}}\sum_{i=1}^n
(\widehat{F}_{Y_1Y_2|X}(y_1,y_2|X_i^*)-F_{Y_1^*Y_2^*}(y_1,y_2))\\
={}&\frac{1}{\sqrt{n}}\sum_{i=1}^n
(F_{Y_1Y_2|X}(y_1,y_2|X_i^*)-F_{Y_1^*Y_2^*}(y_1,y_2))\\
&+\sqrt{n}(\E[\widehat{F}_{Y_1Y_2|X}(y_1,y_2|X^*)|\mathcal{D}_n]
-\E[F_{Y_1Y_2|X}(y_1,y_2|X^*)])\\
&+\frac{1}{\sqrt{n}}\sum_{i=1}^n((\widehat{F}_{Y_1Y_2|X}(y_1,y_2|X_i^*)-\E[\widehat{F}_{Y_1Y_2|X}(y_1,y_2|X_i^*)|\mathcal{D}_n])\\
&\hspace{2cm}-(F_{Y_1Y_2|X}(y_1,y_2|X_i^*)-\E[F_{Y_1Y_2|X}(y_1,y_2|X_i^*)]))\\
\equiv{}&\text{Term I}+\text{Term II}+\text{Term III}.
\end{align*}
Our proof builds on the strategy of Theorem 1 in \cite{Rothe2010} by discussing the second and third term on the right-hand side above.
We start by showing that Term III is asymptotically negligible. Let $\widehat{\Gamma}(y_1,y_2|x;h_n)\equiv\widehat{F}_{Y_1Y_2|X}(y_1,y_2|x;h_n)-F_{Y_1Y_2|X}(y_1,y_2|x)$ for $(y_1,y_2,x)\in\Y_1\Y_2\X$.
Theorem 8 of \cite{Hansen2008} implies that for each $(y_1,y_2)\in\Y_1\Y_2$,
\[
\E\left[\widehat{\Gamma}(y_1,y_2|X^*;h_n)^2|\mathcal{D}_n\right]\leq \sup_{x\in\X}\left|\widehat{\Gamma}(y_1,y_2|x;h_n)\right|^2=O_p\left(\frac{\log{n}}{nh_n^d}+h_n^{2r}\right)=o_p(1).
\]
The last equality holds by Assumptions~\ref{B1} and~\ref{B2}.
Let $\mathcal{C}_r(\X)$ be the class of real-valued functions defined on $\X$ whose partial derivatives up to order $r$ exist and are bounded by some constant.
Example 19.9 of \cite{vanderVaart1998} shows that $\mathcal{C}_r(\X)$ is Donsker whenever $r>d/2$, which is guaranteed under Assumptions~\ref{B2} and~\ref{B3}.
Note that $\{\tilde{x}\mapsto\widehat{\Gamma}(y_1,y_2
|\tilde{x};h_n):(y_1,y_2)\in\Y_1\Y_2\}\subseteq \mathcal{C}_r(\X)$ under Assumptions~\ref{K3} and~\ref{S2}.
Following the arguments in Lemma 19.24 of \cite{vanderVaart1998}, we obtain
\[
\sup_{(y_1,y_2)\in\Y_1\Y_2}
\left|\frac{1}{\sqrt{n}}\sum_{i=1}^n\widehat{\Gamma}(y_1,y_2|X_i^*;h_n)-\E[\widehat{\Gamma}(y_1,y_2|X^*_i;h_n)|\mathcal{D}_n]\right|=o_p(1).
\]

Next, we decompose Term II as follows.
\begin{align*}
&\sqrt{n}(\E[\widehat{F}_{Y_1Y_2|X}(y_1,y_2|X^*)|\mathcal{D}_n]
-\E[F_{Y_1Y_2|X}(y_1,y_2|X^*)])\\
={}&\sqrt{n}\int_{\X^*}(\widehat{F}_{Y_1Y_2|X}(y_1,y_2|x)
-F_{Y_1Y_2|X}(y_1,y_2|x))f_{X^*}(x)\d x\\
={}&\int_{\X^*}\left(\frac{1}{\sqrt{n}h_n^d}\sum_{i=1}^n
(\1\{Y_{1i}\leq y_1,Y_{2i}\leq y_2\}-F_{Y_1Y_2|X}(y_1,y_2|x))K\left(\frac{X_i-x}{h_n}\right)\right)
\frac{f_{X^*}(x)}{\hat{f}_{X}(x)}\d x\\
={}&\frac{1}{\sqrt{n}h_n^d}\sum_{i=1}^n(\1\{Y_{1i}\leq y_1,Y_{2i}\leq y_2\}-F_{Y_1Y_2|X}(y_1,y_2|X_i))\int_{\X^*}K\left(\frac{X_i-x}{h_n}\right)\frac{f_{X^*}(x)}{\hat{f}_{X}(x)}\d x\\
&+\frac{1}{\sqrt{n}h_n^d}\sum_{i=1}^n\int_{\X^*}
(F_{Y_1Y_2|X}(y_1,y_2|X_i)-F_{Y_1Y_2|X}(y_1,y_2|x))K\left(\frac{X_i-x}{h_n}\right)\frac{f_{X^*}(x)}{\hat{f}_{X}(x)}\d x\\
\equiv{}&\text{Term II.a}+\text{Term II.b},
\end{align*}
where $\hat{f}_{X}$ is the kernel density estimator; that is,
\begin{align*}
\hat{f}_{X}(x)\equiv\frac{1}{nh_{n}^d}\sum_{i=1}^{n}K\left(\frac{X_{i}-x}{h}\right).
\end{align*}
Under Assumptions~\ref{S1}, \ref{S3}, and \ref{K1}-\ref{K3}, the second order Taylor series expansion of $1/\hat{f}_{X}(x)$ around $1/f_{X}(x)$ implies that
\begin{align*}
&\text{Term II.a}\\
={}&\frac{1}{\sqrt{n}h_n^d}\sum_{i=1}^n(\1\{Y_{1i}\leq y_1,Y_{2i}\leq y_2\}-F_{Y_1Y_2|X}(y_1,y_2|X_i))\int_{\X^*}K\left(\frac{X_i-x}{h_n}\right)\frac{f_{X^*}(x)}{f_{X}(x)}\d x\\
&+\frac{1}{\sqrt{n}h_n^d}\sum_{i=1}^n(\1\{Y_{1i}\leq y_1,Y_{2i}\leq y_2\}-F_{Y_1Y_2|X}(y_1,y_2|X_i))\\
&\pushright{\cdot\int_{\X^*}K\left(\frac{X_i-x}{h_n}\right)\frac{f_{X^*}(x)}{f_{X}(x)^2}(f_{X}(x)-\hat{f}_{X}(x))\d x
+o_p(1)}\\
={}&\frac{1}{\sqrt{n}}\sum_{i=1}^n(\1\{Y_{1i}\leq y_1,Y_{2i}\leq y_2\}-F_{Y_1Y_2|X}(y_1,y_2|X_i))
\frac{f_{X^*}(X_i)}{f_{X}(X_i)}\\
&+\frac{1}{\sqrt{n}h_n^d}\sum_{i=1}^n\left(\1\{Y_{1i}\leq y_1,Y_{2i}\leq y_2\}-F_{Y_1Y_2|X}(y_1,y_2|X_i)\right)\\
&\pushright{\cdot\int_{\X^*}K\left(\frac{X_i-x}{h_n}\right)\frac{f_{X^*}(x)}{f_{X}(x)^2}(f_{X}(x)-\hat{f}_{X}(x))\d x+o_p(1).}
\end{align*}
We are now in a position to show that
\begin{align}
\triangle_n&\equiv\frac{1}{\sqrt{n}h_n^d}\sum_{i=1}^n(\1\{Y_{1i}\leq y_1,Y_{2i}\leq y_2\}-F_{Y_1Y_2|X}(y_1,y_2|X_i))\notag\\
&\pushright{\cdot\int_{\X^*}K\left(\frac{X_i-x}{h_n}\right)
\frac{f_{X^*}(x)}{f_{X}(x)^2}(f_{X}(x)-\hat{f}_{X}(x))\d x}\label{delta}
\end{align}
is asymptotically uniformly negligible. Note that for each $i\in\{1,\ldots,n\}$,
\begin{align}
&\frac{1}{h_n^d}\int_{\X^*}K\left(\frac{X_i-x}{h_n}\right)
\frac{f_{X^*}(x)}{f_{X}(x)^2}(f_{X}(x)-\hat{f}_{X}(x))\d x\notag \\
={}&\frac{f_{X^*}(X_i)}{f_{X}(X_i)}+O_p(h_n^r)-\frac{1}{nh_n^d}\sum_{j=1}^n\int K(w)\frac{f_{X^*}(X_i+h_nw)}{f_{X}(X_i+h_nw)^2}
K\left(\frac{X_i-X_j}{h_n}+w\right)\d w\notag\\
={}&\frac{f_{X^*}(X_i)}{f_{X}(X_i)}-\frac{1}{nh_n^d}\sum_{j=1}^n
\frac{f_{X^*}(X_i)}{f_{X}(X_i)^2}K\left(\frac{X_i-X_j}{h_n}\right)
-\frac{1}{n}\sum_{j=1}^nQ(X_i,X_j;h_n)+o_p(n^{-1/2}),\label{int}
\end{align}
where
\[
Q(x_1,x_2;h_n)\equiv\frac{1}{h_n^d}\int K(z)\left(
\frac{f_{X^*}(x_1+h_nz)}{f_{X}(x_1+h_nz)^2}
K\left(\frac{x_1-x_2}{h_n}+z\right)
-\frac{f_{X^*}(x_1)}{f_{X}(x_1)^2}
K\left(\frac{x_1-x_2}{h_n}\right)\right)\d z.
\]
For ease of exposition, we introduce the following notation.
Let $\bar{Q}(x;h_n)\equiv\E[Q(x,X;h_n)]$, $\bar{f}(x;h_n)\equiv\E\left[\frac{1}{h_n^d}K\left(\frac{x-X}{h_n}\right)\right]$.
In addition, for each $(y_1,y_2)\in\Y_1\Y_2$, let
\begin{align}
\label{Lfun}
&\mathcal{L}(W_i,W_j;y_1,y_2,h_n)\equiv(\1\{Y_{1i}\leq y_1,Y_{2i}\leq y_2\}-F_{Y_1Y_2|X}(y_1,y_2|X_i))\notag\\
&\pushright{\cdot\frac{f_{X^*}(X_i)}{f_{X}(X_i)^2}
\left(h_n^d\bar{f}(X_i;h_n)-K\left(\frac{X_i-X_j}{h_n}\right) \right)}
\end{align}
and
\begin{align}
\label{Mfun}
&\mathcal{M}(W_i,W_j;y_1,y_2,h_n)\equiv(\1\{Y_{1i}\leq y_1,Y_{2i}\leq y_2\}-F_{Y_1Y_2|X}(y_1,y_2|X_i))\notag\\
&\pushright{\cdot h_n^d(Q(X_i,X_j;h_n)-\bar{Q}(X_i;h_n))}
\end{align}
where $W_i\equiv(Y_{1i},Y_{2i},X_i)^\top$ for each $i\in\{1,2,\ldots,n\}$. Substituting (\ref{int}) into (\ref{delta}), we obtain
\begin{align*}
\triangle_n
&=\frac{1}{n^{3/2}h^d_n}\sum_{i=1}^n\sum_{j=1}^n
(\1\{Y_{1i}\leq y_1,Y_{2i}\leq y_2\}-F_{Y_1Y_2|X}(y_1,y_2|X_i))\\
&\pushright{\cdot\frac{f_{X^*}(X_i)}{f_{X}(X_i)^2}
\left(h^d_nf_{X}(X_i)-K\left(\frac{X_i-X_j}{h_n}\right)\right)}\\
&\quad-\frac{1}{n^{3/2}}\sum_{i=1}^n\sum_{j=1}^n
(\1\{Y_{1i}\leq y_1,Y_{2i}\leq y_2\}-F_{Y_1Y_2|X}(y_1,y_2|X_i))Q(X_i,X_j;h_n)+o_p(1)\\
&=\frac{1}{n^{3/2}h^d_n}\sum_{i=1}^n\sum_{j=1}^n\mathcal{L}(W_i,W_j;y_1,y_2,h_n)-\frac{1}{n^{1/2}}\sum_{i=1}^n\sum_{j=1}^n\mathcal{M}(W_i,W_j;y_1,y_2,h_n)\\
&\quad+\frac{1}{n^{1/2}}\sum_{i=1}^n(\1\{Y_{1i}\leq y_1,Y_{2i}\leq y_2\}-F_{Y_1Y_2|X}(y_1,y_2|X_i))\frac{f_{X^*}(X_i)}{f_{X}(X_i)^2}
(f_{X}(X_i)-\bar{f}(X_i;h_n))\\
&\quad-\frac{1}{n^{1/2}}\sum_{i=1}^n(\1\{Y_{1i}\leq y_1,Y_{2i}\leq y_2\}-F_{Y_1Y_2|X}(y_1,y_2|X_i))\bar{Q}(X_i;h_n)+o_p(1).
\end{align*}
Note that for each $(y_1,y_2)\in\Y_1\Y_2$ and $h>0$,
$\frac{1}{n(n-1)}\sum_{i=1}^n\sum_{j\neq i}\mathcal{L}(W_i,W_j;y_1,y_2,h)$ and
$\frac{1}{n(n-1)}\sum_{i=1}^n\sum_{j\neq i}\mathcal{M}(W_i,W_j;y_1,y_2,h)$
are both degenerate U statistics of order 2 because with probability one, $\E[\mathcal{L}(W_i,W_j;y_1,y_2,h)|W_i]=\E[\mathcal{L}(W_i,W_j;y_1,y_2,h)|W_j]=0$ and $\E[\mathcal{M}(W_i,W_j;y_1,y_2,h)|W_i]=\E[\mathcal{M}(W_i,W_j;y_1,y_2,h)|W_j]=0$. \cref{LEM3} shows that both $\{\mathcal{L}(\cdot,\cdot;y_1,y_2,h):(y_1,y_2)\in \Y_1\Y_2,h>0\}$ and $\{\mathcal{M}(\cdot,\cdot;y_1,y_2,h):(y_1,y_2)\in \Y_1\Y_2,h\in(0,1)\}$ are Euclidean classes. It follows from Corollary 4 of \cite{Sherman1994} that
\[
\sup_{(y_1,y_2,h)\in \Y_1\Y_2\times(0,\infty)}\left|\frac{1}{n}\sum_{i=1}^n\sum_{j\neq i}\mathcal{L}(W_i,W_j;y_1,y_2,h)\right|
=O_p(1)
\]
and
\[
\sup_{(y_1,y_2,h)\in \Y_1\Y_2\times(0,1)}\left|\frac{1}{n}\sum_{i=1}^n\sum_{j\neq i}\mathcal{M}(W_i,W_j;y_1,y_2,h)\right|=O_p(1).
\]
Finally, simple algebra shows that Term II.b is asymptotically uniformly negligible.
\end{proof}

\begin{lem}
\label{LEM2}
Suppose that the assumptions of \cref{LEM1} are valid. If, in addition, Assumptions~\ref{S6} and \ref{S7} hold, then for each $j\in\{1,2\}$,
\[
\sup_{u\in(0,1)}\left|\sqrt{n}(\widehat{F}_{Y_j^*}^{-1}(u)-F_{Y_j^*}^{-1}(u))\right|=O_p(1).
\]
\end{lem}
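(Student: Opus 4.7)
The strategy is to view the marginal counterfactual quantile estimator as a smooth functional of the marginal counterfactual CDF estimator and invoke the functional delta method. First, \cref{LEM1} establishes weak convergence of the joint counterfactual CDF process $\sqrt{n}(\widehat{F}_{Y_1^*Y_2^*}-F_{Y_1^*Y_2^*})$ in $(\ell^\infty(\Y_1\Y_2),\|\cdot\|_\infty)$ to a tight centered Gaussian process, via the asymptotically linear representation in terms of i.i.d.\ summands indexed by a Donsker class. Evaluating at $y_{3-j}=\sup\Y_{3-j}$ (or taking limits, using \cref{S5}) yields weak convergence of the marginal process $\sqrt{n}(\widehat{F}_{Y_j^*}-F_{Y_j^*})$ in $(\ell^\infty(\Y_j),\|\cdot\|_\infty)$ to a tight centered Gaussian process $\mathbb{G}_j$ for each $j\in\{1,2\}$. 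In particular, $\|\sqrt{n}(\widehat{F}_{Y_j^*}-F_{Y_j^*})\|_{\infty}=O_p(1)$.

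Next, I verify that the quantile map $F\mapsto F^{-1}$ is Hadamard differentiable at $F_{Y_j^*}$ tangentially to $C(\Y_j^*)$. \cref{S5} ensures that $\Y_j^*$ is a bounded interval; integrating the joint density in \cref{S6,S7} over the other coordinate shows that the marginal density $f_{Y_j^*}$ is continuous and strictly bounded away from zero on the entire support $\Y_j^*$. These are exactly the hypotheses of Lemma 21.3 of \citet{vanderVaart1998}, so $F\mapsto F^{-1}$ is Hadamard differentiable at $F_{Y_j^*}$ with derivative $H\mapsto -(H\circ F_{Y_j^*}^{-1})/(f_{Y_j^*}\circ F_{Y_j^*}^{-1})$. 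Applying the functional delta method (Theorem 20.8 of \citet{vanderVaart1998}) then delivers the uniform linearization
\[
\sqrt{n}(\widehat{F}_{Y_j^*}^{-1}(u)-F_{Y_j^*}^{-1}(u))
=-\frac{\sqrt{n}(\widehat{F}_{Y_j^*}(F_{Y_j^*}^{-1}(u))-u)}{f_{Y_j^*}(F_{Y_j^*}^{-1}(u))}+r_{j,n}(u),
\]
with $\sup_{u\in(0,1)}|r_{j,n}(u)|=o_p(1)$. The leading term is $O_p(1)$ uniformly in $u$ because $\inf_{y\in\Y_j^*}f_{Y_j^*}(y)>0$ by \cref{S7} and the numerator is $O_p(1)$ uniformly by the weak convergence just established; this yields the claimed bound.

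\textbf{Main obstacle.} The delicate point is obtaining the uniform linearization over the full open interval $(0,1)$ rather than over a shrinking interior interval $[\alpha,1-\alpha]$. This requires controlling the quantile map near the boundary, which is precisely why we need $f_{Y_j^*}$ to be bounded below by a strictly positive constant on the \emph{entire} support, not merely on compact subsets of its interior. \cref{S7}, combined with the boundedness of $\Y_j^*$ in \cref{S5}, supplies exactly this: Hadamard differentiability then extends to all of $(0,1)$, and in particular the remainder $r_{j,n}$ can be controlled uniformly on $(0,1)$ without a boundary-trimming argument.
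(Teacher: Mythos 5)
Your proposal is correct and takes essentially the same route as the paper: linearize the quantile map via Hadamard differentiability of $F\mapsto F^{-1}$ (the paper's equation (A.1), obtained from Theorem 20.8 and Lemma 21.4 of \citet{vanderVaart1998}), use Assumptions S5--S7 to keep the marginal density bounded away from zero on a compact support so that the linearization and the bound $1/f_{Y_j^*}\le c_0$ hold uniformly over all of $(0,1)$, and thereby reduce the claim to $\sup_{y}|\sqrt{n}(\widehat{F}_{Y_j^*}(y)-F_{Y_j^*}(y))|=O_p(1)$ via the representation in \cref{LEM1}. The only cosmetic difference is that you justify this last uniform bound by weak convergence of the counterfactual CDF process (Donsker property of the leading terms), whereas the paper bounds the two leading empirical-process terms directly via Euclidean classes and Corollary 4 of \citet{Sherman1994}; the two justifications are interchangeable here.
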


\begin{proof}
Fix $j\in\{1,2\}$. Equation \cref{AA1} in the proof of \cref{PRO3} implies that
\[
\sup_{u\in(0,1)}\left|\sqrt{n}(\widehat{F}_{Y_j^*}^{-1}(u)-F_{Y_j^*}^{-1}(u))\right|
\leq c_0\sup_{y\in\Y^*_j}\left|\sqrt{n}(\widehat{F}_{Y_j^*}\left(y\right)-F_{Y_j^*}(y))\right|+o_p(1)
\]
for some constant $c_0$ under Assumption~\ref{S7}.
It follows from \cref{LEM1} that
\begin{align*}
&\sup_{u\in(0,1)}\left|\sqrt{n}
\left[\widehat{F}_{Y_j^*}^{-1}(u)-F_{Y_j^*}^{-1}(u)\right]\right|\\
\leq{}& c_0\sup_{y\in\Y^*_j}
\left|\frac{1}{\sqrt{n}}\sum_{i=1}^n(F_{Y_j|X}(y|X^*_i)-F_{Y_j^*}(y))\right|\\
&+c_0\sup_{y\in\Y^*_j}
\left|\frac{1}{\sqrt{n}}\sum_{i=1}^n\frac{f_{X^*}(X_i)}{f_{X}(X_i)}
(\1\{Y_{ji}\leq y\}-F_{Y_j|X}(y|X_i))\right|+o_p(1).
\end{align*}
Note that for each $y\in\Y^*_j$,
\[
\frac{1}{n}\sum_{i=1}^n(F_{Y_j|X}(y|X^*_i)-F_{Y_j^*}(y))\quad\text{and}\quad\frac{1}{n}\sum_{i=1}^n\frac{f_{X^*}(X_i)}{f_{X}(X_i)}(\1\{Y_{ji}\leq y\}-F_{Y_j|X}(y|X_i))
\]
are both degenerate U statistics of order 1.

We can show that under Assumption~\ref{S6}, the class $\{F_{Y^*_j}(y):y\in\Y^*_j\}$ is Euclidean by Lemma 2.13 of \cite{PakesPollard1989}.
In addition, the arguments in \cref{LEM2} \textit{mutatis mutandis} show that both classes
$\{\tilde{x}\mapsto F_{Y_j|X}(y|\tilde{x})-F_{Y^*_j}(y):y\in\Y^*_j\}$
and $\{(\tilde{x},\tilde{y})\mapsto \frac{f_{X^*}(\tilde{x})}{f_{X}(\tilde{x})}(\1\{\tilde{y}\leq y\}-F_{Y_j|X}(y|\tilde{x})): y\in\Y^*_j\}$ are Euclidean for some envelope.
Hence, the desired result is achieved because
\[
\sup_{y\in\Y^*_j}\left|\frac{1}{\sqrt{n}}\sum_{i=1}^n
(F_{Y_j|X}(y|X^*_i)-F_{Y_j^*}(y))\right|=o_p(1)
\]
and
\[
\sup_{y\in\Y^*_j}
\left|\frac{1}{\sqrt{n}}\sum_{i=1}^n\frac{f_{X^*}(X_i)}{f_{X}(X_i)}
(\1\{Y_{ji}\leq y\}-F_{Y_j|X}(y|X_i))\right|=o_p(1)
\]
by Corollary 4 of \cite{Sherman1994}.
\end{proof}

\begin{lem}
\label{LEM3}
Suppose that the assumptions of \cref{PRO3} hold.
\begin{enumerate}[(i)]
\item Let $\mathcal{L}$ be the function defined in \cref{Lfun}. The class of functions
\begin{align*}
\{(\tilde{w}_1,\tilde{w}_2)\mapsto\mathcal{L}(\tilde{w}_1,\tilde{w}_2;y_1,y_2,h): (y_1,y_2)\in \Y_1\Y_2, h>0\}
\end{align*}
is Euclidean for some envelope.
\item Let $\mathcal{M}$ be the function defined in \cref{Mfun}. The class of functions
\begin{align*}
\{(\tilde{w}_1,\tilde{w}_2)\mapsto\mathcal{M}(\tilde{w}_1,\tilde{w}_2;y_1,y_2,h): (y_1,y_2)\in \Y_1\Y_2, h\in(0,1)\}
\end{align*}
is Euclidean for some envelope.
\end{enumerate}
\end{lem}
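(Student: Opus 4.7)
The plan is to decompose each function class into a finite product of simpler subclasses, verify each subclass is Euclidean with a suitable envelope, and invoke the preservation of the Euclidean property under finite sums and products of classes with square-integrable envelopes (see Lemma 2.14 of \citealp{PakesPollard1989}). Since the lemma is only invoked in the proof of \cref{LEM1} for $h = h_n \to 0$, I may restrict $h$ to a compact interval, which already supplies a uniformly bounded envelope.

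For part (i), I factor $\mathcal{L}$ into three components: (a) the indexed class $\{\1\{\cdot\leq(y_1,y_2)\}-F_{Y_1Y_2|X}(y_1,y_2\mid\cdot):(y_1,y_2)\in\Y_1\Y_2\}$ acting on the first argument; (b) the fixed ratio $g\equiv f_{X^*}/f_X^2$; (c) the $h$-indexed class $\{h^d\bar f(\cdot;h)-K((\cdot-\cdot)/h): h>0\}$. In (a), the indicator summand is a VC class of south-west orthants and the conditional-CDF summand is uniformly bounded and coordinatewise-monotone in $(y_1,y_2)$, hence Euclidean by Lemma 2.13 of \citet{PakesPollard1989}; the sum is Euclidean with envelope $2$. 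In (b), $g$ is a single square-integrable function by Assumptions S1 and S4 (boundedness of $f_X$ away from zero combined with $\E[(f_{X^*}(X)/f_X(X))^2]<\infty$). In (c), the rescaled-kernel piece $\{K((\cdot-\cdot)/h)\}$ is a bounded VC-type class with envelope $\|K\|_\infty$ by the standard bounded-variation argument for rescaled kernels under Assumption K1, and the averaged piece $h^d\bar f(x;h)=\int K((x-z)/h)f_X(z)\d z$ defines a uniformly bounded ($\leq\|K\|_\infty$) $h$-indexed family of fixed functions of $x$, Lipschitz in $h$ on any compact subinterval by Assumptions K3 and S1, and hence Euclidean. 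Product preservation yields part (i).

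Part (ii) follows the same template after expressing $h^d(Q(\tilde x_1,\tilde x_2;h)-\bar Q(\tilde x_1;h))$ as a $z$-integral. Concretely,
\[
h^d Q(\tilde x_1,\tilde x_2;h)=\int K(z)\bigl[\psi(\tilde x_1+hz)K((\tilde x_1-\tilde x_2)/h+z)-\psi(\tilde x_1)K((\tilde x_1-\tilde x_2)/h)\bigr]\d z
\]
with $\psi\equiv f_{X^*}/f_X^2$. For each fixed $z\in[-1,1]^d$, the translated rescaled-kernel class $\{K((\cdot-\cdot)/h+z):h\in(0,1)\}$ is VC-type with a bound independent of $z$, because the total-variation norm of $K(\cdot+z)$ equals that of $K$. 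The smooth coefficients $\psi(\cdot+hz)$ and $\psi(\cdot)$ are uniformly bounded and Lipschitz in $h$ on $(0,1)$ by Assumptions S1, S3, and S4. Using Fubini and the bounded $z$-domain (Assumption K1) to commute integration with envelope suprema, the $z$-integrated class inherits the Euclidean property; subtracting $\bar Q(\tilde x_1;h)$ preserves it, and multiplying by the factors $A$ and $g$ from part (i) completes the argument.

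The main obstacle is the translated rescaled-kernel step in part (ii): showing that the covering-number bound obtained from the bounded-variation argument is uniform in the translation $z\in[-1,1]^d$, and that this uniformity survives integration against $K(z)\d z$. I would address this by noting that the covering-number bounds depend only on $\|K\|_{\text{TV}}$ and $\|K\|_\infty$, both translation-invariant, so a single $\varepsilon$-cover works for all $z$; Fubini then lifts the bound to the integrated family. The remaining envelope and $L^2$-integrability checks reduce to routine applications of Assumptions K1 and S1--S4.
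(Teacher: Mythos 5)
Your overall strategy coincides with the paper's: factor $\mathcal{L}$ and $\mathcal{M}$ into products of simpler indexed classes, invoke the stability of the Euclidean property under sums and products with square-integrable envelopes (Lemma 2.14 of Pakes and Pollard, 1989), and isolate the translated rescaled-kernel class as the delicate step in part (ii). Three of your sub-justifications need tightening, though none changes the architecture. First, for the class $\{\tilde x\mapsto F_{Y_1Y_2|X}(y_1,y_2|\tilde x):(y_1,y_2)\in\Y_1\Y_2\}$ you appeal to coordinatewise monotonicity in $(y_1,y_2)$ as the hypothesis of Lemma 2.13 of Pakes and Pollard; that lemma requires a Lipschitz-in-index bound, not monotonicity. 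The correct route (and the paper's) is $|F_{Y_1Y_2|X}(y,\cdot\,|x)-F_{Y_1Y_2|X}(y',\cdot\,|x)|\leq \sup_{(y_1,y_2)}f_{Y_1Y_2|X}(y_1,y_2|x)\,\|y-y'\|$, whose Lipschitz envelope is square-integrable precisely by Assumption~\ref{S4} --- which is why S4 appears in the hypotheses. Second, for $\{K((\tilde x_1-\tilde x_2)/h+\tilde z):h\in(0,1)\}$ your translation-invariance-of-total-variation argument is the right idea in $d=1$, but for multivariate bounded-variation kernels the VC-subgraph property is nontrivial; the paper outsources the untranslated case to Gin\'e and Guillou (2002) and the translated case to Lemma 22(i) of Nolan and Pollard (1987), which is where Assumption~\ref{K4} ($K(u)=K(|u|)$) is consumed --- your sketch never uses K4, so you would need to either invoke that lemma or supply the multivariate covering-number bound yourself. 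Third, for passing the Euclidean property through the $z$-integral, the paper first applies Cauchy--Schwarz to bound the $L^1$ increment of $h^dQ$ in $h$ by an $L^2$ increment of the integrand viewed as a function of $(\tilde x_1,\tilde x_2,\tilde z)$ with $\tilde z$ uniform on $[-1,1]^d$, and relies on Sherman's (1994) Lemma 5 for integrating out arguments (also used for $\bar f$ and $\bar Q$); your Fubini sketch is the same idea but should be anchored to one of these results rather than asserted. Finally, part (i) is stated for all $h>0$ and the proof of \cref{LEM1} takes a supremum over $h\in(0,\infty)$, so restricting $h$ to a compact interval proves less than claimed, although it would suffice for the application since $h_n\to 0$.
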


\begin{proof}
\leavevmode
\begin{enumerate}[(i)]
\item
By Lemma 2.14 of \cite{PakesPollard1989}, it is sufficient to show the classes $\{(\tilde{y}_1,\tilde{y}_2)\mapsto\1\{\tilde{y}_1\leq y_1,\tilde{y}_2\leq y_2\}: (y_1,y_2)\in \Y_1\Y_2\}$,
$\{\tilde{x}\mapsto F_{Y_1Y_2|X}(y_1,y_2|\tilde{x}): (y_1,y_2)\in \Y_1\Y_2\}$, $\{(\tilde{x}_1,\tilde{x}_2)\mapsto K\left((\tilde{x}_1-\tilde{x}_2)/h\right):h>0\}$, and $\{\tilde{x}\mapsto h^d\bar{f}(\tilde{x};h):h>0\}$ are all Euclidean. First, Example 2.6.1 and Theorem 2.6.7 of \cite{vanderVaartWellner1996} imply that the class $\{(\tilde{y}_1,\tilde{y}_2)\mapsto\1\{\tilde{y}_1\leq y_1,\tilde{y}_2\leq y_2\}: (y_1,y_2)\in \Y_1\Y_2\}$ is Euclidean for a constant envelope.
Next, from Lemma 2.13 of \cite{PakesPollard1989} and under Assumption~\ref{S4},
the class $\{\tilde{x}\mapsto F_{Y_1Y_2|X}(y_1,y_2|\tilde{x}): (y_1,y_2)\in \Y_1\Y_2\}$ is Euclidean for some envelope.
In addition, the discussion on page 911 of \cite{Gine2002} implies that $\{\tilde{u}\mapsto K(\tilde{u}/h):h>0\}$ is a VC-subgraph class under Assumption~\ref{K1}.
It follows from Lemma 2.6.18(vii) of \cite{vanderVaartWellner1996} that the class $\{(\tilde{x}_1,\tilde{x}_2)\mapsto K\left((\tilde{x}_1-\tilde{x}_2)/h\right):h>0\}$ is a VC-subgraph class and thus Euclidean for a constant envelope.
Finally, Lemma 5 of \cite{Sherman1994} implies that the class $\{\tilde{x}\mapsto h^d\bar{f}(\tilde{x};h):h>0\}$ is also Euclidean for a constant envelope.
\item
By the argument in part (i) and Lemma 5 of \cite{Sherman1994},
it suffices to show that $\{(\tilde{x}_1,\tilde{x}_2)\mapsto h^dQ(\tilde{x}_1,\tilde{x}_2;h):h\in(0,1)\}$ is Euclidean. Let
\begin{align*}
V(x_1,x_2,z;h)\equiv\frac{f_{X^*}(x_1+hz)}{[f_{X}(x_1+hz)]^2}K\left(\frac{x_1-x_2}{h}+z\right)
-\frac{f_{X^*}(x_1)}{[f_{X}(x_1)]^2}K\left(\frac{x_1-x_2}{h}\right).
\end{align*}
Note that for any $\tilde{h}_1$ and $\tilde{h}_2$ in $(0,1)$ and random vectors $\tilde{X}_1$ and $\tilde{X}_2$,
\begin{align*}
&\E[|\tilde{h}_1^dQ(\tilde{X}_1,\tilde{X}_2;\tilde{h}_1)
-\tilde{h}_2^dQ(\tilde{X}_1,\tilde{X}_2;\tilde{h}_2)|]\\
\leq{}& 2^d\sqrt{\E[|K(U)|^2]}\sqrt{\E[|V(\tilde{X}_1,\tilde{X}_2,U; \tilde{h}_1)-V(\tilde{X}_1,\tilde{X}_2,U;\tilde{h}_2)|^2]}
\end{align*}
where $U$ is a random vector uniformly distributed on $[-1,1]^d$.
Hence, it remains to show that the two classes
\begin{align*}
\left\{(\tilde{x},\tilde{z})\mapsto \frac{f_{X^*}(\tilde{x}+h\tilde{z})}{[f_{X}(\tilde{x}+h\tilde{z})]^2}:h\in(0,1)\right\}
\end{align*}
and
\begin{align*}
\left\{(\tilde{x}_1,\tilde{x}_2,\tilde{z})\mapsto K\left(\frac{\tilde{x}_1-\tilde{x}_2}{h}+\tilde{z}\right):h\in(0,1)\right\}
\end{align*}
are both Euclidean. It follows from Lemma 2.13 of \cite{PakesPollard1989} that the former class is Euclidean because
for any $\tilde{h}_1$ and $\tilde{h}_2$ in $(0,1)$,
\begin{align*}
\sup_{\tilde{x},\tilde{z}}
\left|\frac{f_{X^*}(\tilde{x}+\tilde{h}_1\tilde{z})}{[f_{X}(\tilde{x}+\tilde{h}_1\tilde{z})]^2}
-\frac{f_{X^*}(\tilde{x}+\tilde{h}_2\tilde{z})}
{[f_{X}(\tilde{x}+\tilde{h}_2\tilde{z})]^2}\right|\leq c|\tilde{h}_1-\tilde{h}_2|
\end{align*}
for some constant $c$ by Assumptions~\ref{S1} and~\ref{S3}.
Finally, Lemma 22(i) of \cite{Nolan1987} implies that the latter class is Euclidean by Assumptions~\ref{K1} and~\ref{K4}.\qedhere
\end{enumerate}
\end{proof}

\end{appendices}

\normalsize
\bibliography{CFcopula_20230309}
\bibliographystyle{ecta}

\end{document}